\let\epsilon\varepsilon
\DeclareMathOperator{\poly}{poly}
\definecolor{orange}{RGB}{235,90,0}
\definecolor{darkorange}{RGB}{175,30,0}
\definecolor{turkis}{RGB}{131,182,182}
\definecolor{darkturkis}{RGB}{31,82,82}
\definecolor{green}{RGB}{102,180,0}
\definecolor{darkgreen}{RGB}{51,90,0}
\definecolor{myblue}{RGB}{0,0,213}
\definecolor{mydarkblue}{RGB}{0,0,100}
\definecolor{mybrightblue}{HTML}{74B0E4}
\definecolor{mybrighterblue}{HTML}{B3EAFA}
\definecolor{lila}{RGB}{102,0,102}
\definecolor{darkred}{RGB}{139,0,0}
\definecolor{darkyellow}{RGB}{188,135,2}
\definecolor{brightgray}{RGB}{200,200,200}
\definecolor{darkgray}{RGB}{50,50,50}
\definecolor{amaranth}{rgb}{0.9, 0.17, 0.31}
\definecolor{alizarin}{rgb}{0.82, 0.1, 0.26}
\definecolor{amber}{rgb}{1.0, 0.75, 0.0}
\definecolor{green(ryb)}{rgb}{0.4, 0.69, 0.2}
\definecolor{hanblue}{rgb}{0.27, 0.42, 0.81}
\definecolor{grannysmithapple}{rgb}{0.66, 0.89, 0.63}
\begin{document}
\title{Optimal Wheeler Language Recognition
\thanks{The conference version of this paper can be found in \cite{spire23paper}.\\
Ruben Becker, Davide Cenzato, Sung-Hwan Kim, Bojana Kodric, and Nicola Prezza are funded by the European Union (ERC, REGINDEX, 101039208). Views and opinions expressed are however those of the author(s) only and do not necessarily reflect those of the European Union or the European Research Council. Neither the European Union nor the granting authority can be held responsible for them. Alberto Policriti is supported by project National Biodiversity Future Center-NBFC (CN\_00000033, CUP G23C22001110007) under the National Recovery and Resilience Plan  of Italian Ministry of University and
Research funded by European Union--NextGenerationEU.}}
%
%
\author{
Ruben {Becker}\inst{1}\orcidID{0000-0002-3495-3753}
\and
Davide {Cenzato}\inst{1}\orcidID{0000-0002-0098-3620}
\and 
Sung-Hwan {Kim}\inst{1}\orcidID{0000-0002-1117-5020}
\and
Bojana {Kodric}\inst{1}\orcidID{0000-0001-7242-0096}
\and
Alberto {Policriti}\inst{2}\orcidID{0000-0001-8502-5896}
\and
Nicola {Prezza}\inst{1}\orcidID{0000-0003-3553-4953}
}
\authorrunning{R. Becker et al.}
%
\institute{
	Ca' Foscari University of Venice, Italy\\ \email{\{rubensimon.becker,davide.cenzato,sunghwan.kim,\\bojana.kodric,nicola.prezza\}@unive.it \vspace{5pt}}
\and
	University of Udine, Italy\\ \email{alberto.policriti@uniud.it}
}
\maketitle              
\begin{abstract}
A Wheeler automaton is a finite state automaton whose states admit a total \emph{Wheeler order}, reflecting the co-lexicographic order of the strings labeling source-to-node paths. A \emph{Wheeler language} is a regular language admitting an accepting Wheeler automaton. 
Wheeler languages admit efficient and elegant solutions to hard problems such as automata compression and regular expression matching, therefore deciding whether a regular language is Wheeler is relevant in applications requiring efficient solutions to those problems.
In this paper, we show that it is possible to
decide whether a DFA with $n$ states and $m$ transitions recognizes a Wheeler language in $O(mn)$ time. This is a significant improvement over the running time $O(n^{13} + m\log n)$ of the previous polynomial-time algorithm (Alanko et al., Information and Computation 2021). 
A proof-of-concept implementation of this algorithm is available in a public repository.
We complement this upper bound with a conditional matching lower bound stating that, unless the strong exponential time hypothesis (SETH) fails, the problem cannot be solved in strongly subquadratic time. 
The same problem is known to be
PSPACE-complete when the input is an NFA (D'Agostino et al., Theoretical Computer Science 2023). Together with that result, our paper essentially closes the algorithmic problem of Wheeler language recognition. 

\keywords{Wheeler Languages  \and Regular Languages \and Finite Automata.}
\end{abstract}

\section{Introduction}

Wheeler automata were introduced by Gagie et al.\ in \cite{gagie:tcs17:wheeler} as a natural generalization of prefix-sorting techniques (standing at the core of the most successful string processing algorithms) to labeled graphs. Informally speaking, an automaton on alphabet $\Sigma$ is Wheeler if the co-lexicographic order of the strings labeling source-to-states paths can be ``lifted'' to a \emph{total} order of the states (a formal definition is given in Definition \ref{def:WDFA}).
As shown by the authors of \cite{gagie:tcs17:wheeler}, Wheeler automata can be encoded in just $O(\log|\Sigma|)$ bits per edge and they support near-optimal time pattern matching queries (i.e.\ finding all nodes reached by a path labeled with a given query string). These properties make them a powerful tool in fields such as bioinformatics, where one popular way to cope with the rapidly-increasing number of available fully-sequenced genomes, is to encode them in a pangenome graph: aligning short DNA sequences allows one to discover whether the sequences at hand contain variants recorded (as sub-paths) in the graph \cite{pangenomeGraphs}. 

Wheeler languages --- that is, regular languages recognized by Wheeler automata --- were later studied by  Alanko et al.\ in \cite{alanko:iac21:wheeler}. In that paper, the authors showed that Wheeler DFAs (WDFAs) and Wheeler NFAs (WNFAs) have the same expressive power. As a matter of fact, the class of Wheeler languages proved to possess several other remarkable properties, in addition to represent the class of regular languages for which efficient indexing data structures exist. Such properties motivated them to study the following decisional problem (as well as the corresponding variant on NFAs / regular expressions): 

\begin{definition}[\textsc{WheelerLanguageDFA}]
Given a DFA $\mathcal A$, decide if  the regular language $\mathcal L(\mathcal A)$ recognized by $\mathcal A$ is Wheeler. 
\end{definition}

Alanko et al.~\cite{alanko:iac21:wheeler}
provided the following characterization: 
a language $\mathcal L$ is Wheeler if and only if, for any co-lexicographically monotone sequence of strings $\alpha_1 \prec \alpha_2 \prec \dots$ (or with reversed signs $\succ$) belonging to the prefix-closure of $\mathcal L$,
on the minimum DFA for $\mathcal L$ there 
exists some $N \in \mathbb N$ and state $u$ such that 
by reading 
$\alpha_i$ from the source state we end up in state $u$ for all $i\geq N$. This characterization allowed them to 
devise a polynomial-time algorithm solving \textsc{WheelerLanguageDFA}.
This result is not trivial for two main reasons: (1)  
the smallest WDFA for a Wheeler language  $\mathcal L$ could be \emph{exponentially larger} than the smallest DFA  for $\mathcal L$ \cite{alanko:iac21:wheeler},
and 
(2) the corresponding \textsc{WheelerLanguageNFA} problem
(i.e., the input $\mathcal A$ is an NFA)
is PSPACE-complete \cite{DAGOSTINO2023113709}.

\paragraph{\textbf{Our Contributions}}

Despite being polynomial, the algorithm of Alanko et al. has a prohibitive time complexity: $O(n^{13}+m\log n)$, where $m$ and $n$ are the number of transitions and states of the input DFA\footnote{While the authors only claim $m^{O(1)}$ time, a finer analysis yields this bound.}. 
In this paper, we present a much simpler parameterized (worst-case quadratic) algorithm solving \textsc{WheelerLanguageDFA}. The complexity of our algorithm depends on a parameter $p$ --- the \emph{co-lex width} of the minimum DFA $\mathcal A_{min}$ for the language \cite{cotumaccio:soda21:psortable} (Definition \ref{def:colex width}), which is never larger than $n$ and which measures the ``distance'' of $\mathcal A_{min}$ from being Wheeler; e.g., if $\mathcal A_{min}$ is itself Wheeler, then $p=1$. We prove:

\begin{theorem}\label{thm:check wheeler}
\label{theorem: main result}
    \textsc{WheelerLanguageDFA} can be solved in $O(mp + m\log n) \subseteq O(mn)$ time on any DFA $\mathcal A$ with $n$ states and $m$ edges, where $p\leq n$ is the co-lex width of the minimum automaton $\mathcal A_{min}$ equivalent to $\mathcal A$.
\end{theorem}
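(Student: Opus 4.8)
The plan is to establish the theorem by first computing the minimum DFA $\mathcal A_{min}$ and then working with the co-lex order structure on its states. I would start by invoking Hopcroft's algorithm to obtain $\mathcal A_{min}$ in $O(m\log n)$ time; from here on all work is on $\mathcal A_{min}$, which has at most $n$ states and $m$ transitions. The core idea is to use the characterization of Alanko et al.\ recalled in the introduction: $\mathcal L(\mathcal A)$ is Wheeler if and only if every co-lexicographically monotone sequence of strings in the prefix-closure of the language eventually ``converges'' to a single state of $\mathcal A_{min}$. The technical engine realizing this check should be the co-lex width $p$ and an associated partial (pre-)order on the states: one computes, in $O(mp)$ time, a minimum-size chain decomposition (or the co-lex partial order itself) of $\mathcal A_{min}$, using the machinery of Cotumaccio--Prezza on $p$-sortable automata referenced via Definition~\ref{def:colex width}.

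The key steps, in order, would be: \textbf{(1)} minimize the input DFA; \textbf{(2)} construct the co-lex order information on $\mathcal A_{min}$ — concretely, compute for each pair of states whether they are co-lex comparable and, if so, in which direction, or equivalently compute a width-$p$ chain partition, all within the $O(mp)$ budget; \textbf{(3)} translate the ``eventual convergence'' condition into a local, checkable combinatorial property of $\mathcal A_{min}$ equipped with this order. The natural candidate for step~(3) is a forbidden-pattern characterization: $\mathcal L$ fails to be Wheeler exactly when $\mathcal A_{min}$ contains a certain small obstruction — intuitively, two distinct states reachable by co-lex-arbitrarily-close strings that can never be merged, which manifests as a pair of states with incomparable co-lex ``intervals'' together with cycles allowing their string sets to interleave. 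Detecting such an obstruction should reduce to a reachability / strongly-connected-component computation on an auxiliary product-like graph of size $O(mp)$, giving the claimed running time.

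I expect the main obstacle to be step~(3): proving that the global, sequence-based characterization of Alanko et al.\ is equivalent to a finite, locally checkable property whose verification fits in $O(mp)$ time. The subtlety is that monotone sequences of prefixes can behave badly — infinitely many strings crowding around a co-lex ``cut'' — so one must argue that if no single state absorbs such a sequence, then already a bounded-size witness (two states plus bounded-length connecting paths and cycles) exists, and conversely that any such witness genuinely breaks Wheelerness of the \emph{language}, not merely of the particular automaton. Getting the right notion of witness so that its absence is both necessary and sufficient, and simultaneously cheap to test, is where the real work lies; the bound $p \le n$ and hence $O(mp) \subseteq O(mn)$ is then immediate from the definition of co-lex width.

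Secondarily, I would need to be careful with the arithmetic of the time bound: the $O(m\log n)$ term is purely from minimization, while every subsequent phase — computing the co-lex order, building the auxiliary graph, running the obstruction search — must be shown to cost $O(mp)$ and not, say, $O(np^2)$ or $O(m p \log n)$. This likely requires exploiting that $\mathcal A_{min}$ is deterministic (so each state has at most $|\Sigma|$ out-transitions counted in $m$) and that the chain-partition structure lets one replace a quadratic pairwise comparison by a linear pass over $p$ chains. I would present this accounting last, once the combinatorial characterization is nailed down.
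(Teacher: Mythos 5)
Your outline correctly anticipates the shape of the paper's solution — minimize via Hopcroft, compute co-lex interval data on $\mathcal A_{min}$, build a pruned product graph of size $O(mp)$, and look for an obstruction as a cycle — and your informal description of the witness (a pair of states with overlapping co-lex intervals sitting on cycles whose string sets can interleave) matches what the paper actually proves. But you explicitly leave the load-bearing step~(3) unresolved, and that step is precisely the technical content of the result. Concretely, one needs Theorem~\ref{thm: nonWheeler iff cycle}: $\mathcal L(\mathcal A)$ is not Wheeler if and only if $\mathcal A_{min}^2$ contains a cycle $(u_1,v_1)\to\cdots\to(u_k,v_k)\to(u_1,v_1)$ with $u_i\ne v_i$ and $\mathcal I(u_i)\cap\mathcal I(v_i)\ne\emptyset$ for all $i$. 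Neither direction is routine. The $(\Leftarrow)$ direction must extract from the cycle (labeled $\gamma$) two finite strings $\alpha\in I_{u_1}$, $\beta\in I_{v_1}$ lying strictly on the \emph{same} side of $\gamma^\omega$; this needs the infimum/supremum machinery of Lemma~\ref{lem: observation} (in particular item~\ref{obs: alpha exists when gamme not eq infsup}) before one can pump and produce the monotone interleaving sequence $\alpha\gamma^{ld}\prec\beta\gamma^{ld}\prec\alpha\gamma^{l(d+1)}\prec\cdots$ witnessing entanglement. The $(\Rightarrow)$ direction starts from Lemma~\ref{lem: nonWheeler iff entangled} and must repeatedly pull back the entangled pair along equally labeled predecessor edges while \emph{preserving} both distinctness and entanglement; the non-obvious point, argued by contradiction in the paper, is that among the $a$-predecessors of the two entangled states an entangled pair must exist, which is what lets the pigeonhole step close a cycle after $|Q_{min}|^2$ iterations. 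Without naming and proving this equivalence your reduction to cycle detection is a conjecture, not an argument.

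A second, smaller gap is the claimed $O(mp)$ size and construction time of the pruned product automaton. You gesture at ``exploiting the chain-partition structure'', but the actual content (Lemma~\ref{lem:size of pruned square A}) is an interval-graph counting argument: width $p$ means no $p+1$ pairwise overlapping intervals $\mathcal I(u)$, and a Tur\'an-type bound for interval graphs without $K_{p+1}$ gives at most $2n(p-1)$ surviving pairs and $2m(p-1)$ surviving transitions, after which a sort-by-$\inf I_u$ and sweep builds $\mathcal{\hat A}_{min}^2$ in time proportional to its size. You would also need to compute the intervals $\mathcal I(u)$ themselves within the $O(m\log n)$ budget, which the paper gets from \cite{becker2023sorting} (Lemma~\ref{lem:pruning+sorting}). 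As it stands your submission is a correct high-level plan that mirrors the paper's, but the two central lemmas that make the plan work — the cycle characterization and the $O(mp)$ size bound — are identified as ``the real work'' and then not done.
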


The intuition behind Theorem \ref{theorem: main result} is the following. Starting from the characterization of Wheeler languages of Alanko et al. \cite{alanko:iac21:wheeler} based on monotone sequences, we show that $\mathcal L(\mathcal A)$ is not Wheeler if and only if the \emph{square automaton} $\mathcal A_{min}^2 = \mathcal A_{min} \times \mathcal A_{min}$ contains a cycle $(u_1,v_1) \rightarrow (u_2,v_2) \rightarrow \dots \rightarrow (u_k,v_k) \rightarrow (u_1,v_1)$ such that, for all $i=1, \dots, k$, the following two properties hold: (i) $u_i\neq v_i$ and (ii) 
the co-lexicographic ranges of strings reaching $u_i$ and $v_i$ intersect.
As a result, after computing $\mathcal A_{min}$ ($O(m\log n)$ time by Hopcroft's algorithm) and directly building this ``pruned'' version of $\mathcal A_{min}^2$ in $O(mp + m\log n)$ time using recent techniques described in \cite{kop:cpm23:working,becker2023sorting}, testing its acyclicity yields the answer. 
A proof-of-concept implementation of the algorithm behind Theorem \ref{theorem: main result} is available at \url{https://github.com/regindex/Wheeler-language-recognizer}. 

\medskip

We complement the above upper bound with a matching conditional lower bound. Our lower bound is obtained via a reduction from the following problem:

\begin{definition}[Orthogonal Vectors problem (OV)]\label{def:OV}
Given two sets $A$ and $B$, each containing $N$ vectors from $\{0,1\}^d$, decide whether there exist $a\in A$ and $b\in B$ such that $a^Tb = 0$. 
\end{definition}

By a classic reduction \cite{Williams05}, for $d\in\omega(\log N)$ OV cannot be solved in time $O(N^{2-\eta} \poly(d))$ for any constant $\eta>0$
unless the strong exponential time hypothesis \cite{impagliazzo2001complexity} (SETH) fails. We prove \footnote{Our lower bound states that there is no algorithm solving \emph{all} instances in $O(m^{2-\eta})$ time. On sparse DFAs ($m \in \Theta(n)$) our algorithm runs in $O(mn) = O(m^2)$ time.} :

\begin{theorem}\label{thm:lower bound}
If \textsc{WheelerLanguageDFA} can be solved in time $O(m^{2-\eta})$ for some $\eta>0$ on a DFA with $m$ transitions on a binary alphabet, then the Orthogonal Vectors problem with $d \in \Omega(\log N)$ can be solved in time $O(N^{2-\eta} \poly(d))$.
\end{theorem}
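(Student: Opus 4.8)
The plan is to prove the statement by a polynomial reduction from OV. Given an OV instance consisting of sets $A,B\subseteq\{0,1\}^d$ with $|A|=|B|=N$, I would build in $O(N\,\poly(d))$ time a DFA $\mathcal A$ over the binary alphabet with $m\in O(N\,\poly(d))$ states and transitions such that $\mathcal L(\mathcal A)$ is \emph{not} Wheeler if and only if there exist $a\in A$ and $b\in B$ with $a^T b=0$. Given the hypothetical $O(m^{2-\eta})$-time algorithm for \textsc{WheelerLanguageDFA}, running it on $\mathcal A$ then decides OV in time $O\big((N\,\poly(d))^{2-\eta}\big)+O(N\,\poly(d))\subseteq O(N^{2-\eta}\,\poly(d))$, since $(\poly(d))^{2-\eta}\le(\poly(d))^2=\poly(d)$. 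This is exactly the claimed consequence; the hypothesis $d\in\Omega(\log N)$ is not needed by the reduction itself, it only serves to place OV in the SETH-hard regime via the classic reduction of \cite{Williams05}.

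For the construction I would use the square-automaton characterization underlying Theorem~\ref{theorem: main result}: $\mathcal L(\mathcal A)$ fails to be Wheeler precisely when $\mathcal A_{min}$ contains two distinct cycles carrying a common label $c_1\cdots c_k$, one running through states $u_1,\dots,u_k$ and one through states $v_1,\dots,v_k$, such that for every $i$ we have $u_i\neq v_i$ and the co-lex ranges of $u_i$ and $v_i$ intersect. Accordingly, I would include in $\mathcal A$, for each $a\in A$, a cycle $C_a$ entered from the source by a binary prefix $\langle a\rangle$ encoding $a$, and, symmetrically, for each $b\in B$ a cycle $D_b$ entered by $\langle b\rangle$; all these cycles would carry the same fixed label $S$, a binary string that scans the coordinates $1,\dots,d$ in order, and the $A$-cycles would be kept vertex-disjoint from the $B$-cycles so that the condition $u_i\neq v_i$ holds automatically for any pair $(C_a,D_b)$. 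Each cycle also carries short dead-end ``fingerprint'' spurs re-encoding the identity of its vector, so that distinct vectors induce inequivalent states and $\mathcal A$ is, up to trimming, its own minimum DFA. Finally, the encodings $\langle\cdot\rangle$ and the per-coordinate sub-gadget inside $S$ would be designed so that, at the coordinate-$k$ state of $C_a$ and the coordinate-$k$ state of $D_b$, the two co-lex ranges intersect \emph{iff} $a_k\cdot b_k=0$; consequently the pair $(C_a,D_b)$ witnesses a bad cycle iff $a$ and $b$ stay compatible at every coordinate, i.e.\ iff $a^T b=0$.

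I expect the main obstacle to be precisely the design of the per-coordinate gadget and of the encodings $\langle\cdot\rangle$, together with verifying the two co-lex-range claims it must satisfy: that compatibility at coordinate $k$ forces the prefixes reaching the two cycle states to interleave in co-lexicographic order, and that a single coordinate with $a_k=b_k=1$ separates their ranges. This is delicate because a co-lex range is a global property of \emph{all} strings reaching a state, so the binary codes must be chosen symmetrically enough that the range of a $C_a$-state and the matching $D_b$-state are genuinely interleaved rather than separated by the ``$a$ versus $b$'' marker, while still allowing a both-ones coordinate to pull them apart; and I must additionally rule out \emph{spurious} bad cycles, e.g.\ pairs $(C_a,C_{a'})$ or pairs involving the dispatch/fingerprint regions, which the encoding should make co-lex-separated. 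Once the gadget is correct, the rest is routine: the $O(N\,\poly(d))$ size bound (coordinate indices cost $O(\log d)$ bits, vectors and fingerprints $O(d)$ bits), the linear construction time, the easy direction ``orthogonal pair $\Rightarrow$ bad cycle $\Rightarrow$ non-Wheeler'', the checking that trimming and minimization leave the cycles $C_a,D_b$ and their co-lex ranges intact (which the fingerprint spurs guarantee), and the arithmetic of the first paragraph.
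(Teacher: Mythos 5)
Your high-level plan—reduce from OV by building a DFA in which bad cycles in the square automaton correspond to orthogonal pairs, and invoke the $O(m^{2-\eta})$ solver—is the right frame and matches the paper's strategy (Proposition~\ref{prop: reduction}). However, there is a concrete gap in how you propose to encode orthogonality, and I believe it is fatal to the approach as stated, not merely a delicate detail to be verified.

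You want every cycle $C_a$ and $D_b$ to carry the \emph{same} fixed label $S$, so the bad cycle in $\mathcal A^2$ always exists for any vertex-disjoint pair, and you want orthogonality to be encoded in whether the co-lex ranges of corresponding cycle states intersect, coordinate by coordinate. The trouble is that when two cycles share the same label $S$, the infimum and supremum of $I_{u_i}$ for any state $u_i$ on the cycle are dominated by the $\omega$-limit of repeated traversals: depending on whether the prefix reaching the cycle lies co-lexicographically below or above $S^\omega$, the strings $P\cdot S^j\cdot S[1..i]$ form a monotone sequence with one endpoint equal to the $\omega$-string $S^\omega\cdot S[1..i]$. Consequently $\mathcal I(u_i)$ is an interval with $S^\omega\cdot S[1..i]$ as one of its endpoints, and for a second cycle with prefix $P'$ the range $\mathcal I(v_i)$ has the \emph{same} endpoint. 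Whether $\mathcal I(u_i)$ and $\mathcal I(v_i)$ intersect then reduces to whether $P$ and $P'$ fall on the same side of $S^\omega$—a single global bit determined by the prefixes, identical at every position $i$. There is no room for a per-coordinate sub-gadget of $S$ to make the intersection depend on $a_k\cdot b_k$, because the per-coordinate structure of $S$ cancels out against the shared $\omega$-limit.

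The paper does the opposite: orthogonality is encoded in the \emph{labels} of the cycles. Each $C^A_i$ spells the bits of $a_i$ exactly and then accepts any $\ell$-bit tag; each $C^B_j$ is built so that at coordinate $r$ it accepts $\{0,1\}$ when $b_j[r]=0$ and only $0$ when $b_j[r]=1$, followed by the fixed tag $\rho(j)$; a $\#$ marker synchronizes rotations. Thus the cycle in $\mathcal A^2$ between $C^A_i$ and $C^B_j$ \emph{exists at all} iff $a_i^\top b_j=0$. The co-lex intersection condition is then ensured uniformly (independent of orthogonality) by routing two strings suffixed $00\tau$ and $11\tau$ into each $A$-cycle state and one string suffixed $01\tau$ into the corresponding $B$-cycle state via the $V^{out}$ and $I$ layers; minimality comes from reverse-determinism and the $V^{in}$ in-tree, and the alphabet is reduced to $\{0,1\}$ by $0\mapsto 00$, $1\mapsto 11$, $\#\mapsto 101$. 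If you want to salvage your write-up, I suggest shifting the orthogonality encoding into the cycle labels in this way and using the prefix/tree structure only to guarantee that corresponding cycle states always have intersecting ranges, rather than trying to make the ranges do the coordinate-wise work.
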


To prove Theorem \ref{thm:lower bound}, we adapt the reduction used by Equi et al.~\cite{EquiICALP19} to study the complexity of the \emph{pattern matching on labeled graphs} problem. 
The intuition is the following. 
Our new characterization of Wheeler languages states that we essentially need to find two distinct equally-labeled cycles in the minimum DFA for the language (in addition to checking some other properties on those cycles) in order to solve \textsc{WheelerLanguageDFA}. Given an instance of OV, we build a DFA (minimum for its language) having one (non-simple) cycle for each vector in the instance, such that the strings labeling two such cycles match if and only if the two corresponding vectors in the OV instance are orthogonal. As a result, a subquadratic-time solution of \textsc{WheelerLanguageDFA} on this DFA yields a subquadratic-time solution for the OV instance.

\section{Preliminaries}\label{sec:preliminaries}

\paragraph{\textbf{Strings}}

Let $\Sigma$ be a finite alphabet. A \emph{finite string} $\alpha\in \Sigma^*$ 
is a finite concatenation of characters from $\Sigma$. 
The notation $|\alpha|$ indicates the length of the string $\alpha$.
The symbol $\epsilon$ denotes the empty string. 
The notation $\alpha[i]$ denotes the $i$-th character from the beginning of $\alpha$, with indices starting from 1.
Letting $\alpha,\beta\in\Sigma^*$, $\alpha\cdot \beta$ (or simply $\alpha\beta$) denotes the concatenation of strings.
The notation $\alpha[i..j]$ denotes $\alpha[i]\cdot \alpha[i+1]\cdot\ \dots\ \cdot \alpha[j]$. 
An \emph{$\omega$-string} $\beta \in \Sigma^\omega$ (or \emph{infinite string} / \emph{string of infinite length}) is an infinite numerable concatenation of characters from $\Sigma$. 
In this paper, we work with \emph{left-infinite} $\omega$-strings, meaning that $\beta \in \Sigma^\omega$ is constructed from the empty string $\epsilon$ by prepending an infinite number of characters to it. In particular, the operation of appending a character $a\in\Sigma$ at the end of a $\omega$-string $\alpha \in \Sigma^\omega$ is well-defined and yields the $\omega$-string $\alpha a$. 
The notation $\alpha^\omega$, where $\alpha \in \Sigma^*$, denotes the concatenation of an infinite (numerable) number of copies of string $\alpha$.
The co-lexicographic (or co-lex) order $\prec$ of two strings $\alpha,\beta\in \Sigma^* \cup \Sigma^\omega$ is defined as follows. (i) $\epsilon \prec \alpha$ for every $\alpha \in \Sigma^+ \cup \Sigma^\omega$, and (ii) if $\alpha = \alpha'a$ and $\beta=\beta'b$ (with $a,b\in \Sigma$ and $\alpha',\beta'\in \Sigma^* \cup \Sigma^\omega$), $\alpha \prec \beta$ holds if and only if $(a\prec b)  \vee (a=b \wedge \alpha' \prec \beta')$.
In this paper, the symbols $\prec$ and $\preceq$ will be used to denote the total order on the alphabet and the co-lexicographic order between strings/$\omega$-strings. 
Notation $[N]$ indicates the set of integers $\{1,2,\dots, N\}$.

\paragraph{\textbf{DFAs, WDFAs, and Wheeler languages}}

In this paper, we work with deterministic finite state automata (DFAs): 

\begin{definition}[DFA]
A DFA $\mathcal A$ is a quintuple $(Q,\Sigma,\delta,s,F)$ where $Q$ is a finite set of states, $\Sigma$ is an alphabet set, $\delta:Q\times\Sigma\rightarrow Q$ is a transition function, $s(\in Q)$ is a source state, and $F(\subseteq Q)$ is a set of final states.
\end{definition}

For $u,v\in Q$ and $a\in\Sigma$ such that $\delta(u,a)=v$, we define $\lambda(u,v)=a$.
We extend the domain of the transition function to words $\alpha\in \Sigma^*$ as usual, i.e., for $a\in \Sigma$, $\alpha\in \Sigma^*$, and $u\in Q$: $\delta(u,a\cdot \alpha) = \delta(\delta(u,a),\alpha)$ and $\delta(u,\epsilon) = u$.

In this work, $n= |Q|$  denotes the number of states and $m = |\delta| = |\{(u,v,a)\in Q\times Q\times \Sigma :\delta(u,a)=v\}|$ the number of transitions of the input DFA. 

The notation $I_q$ indicates the set of words \emph{reaching} $q$ from the initial state:

\begin{definition}\label{def:I_q}
Let $\mathcal A = (Q, \Sigma, \delta, s, F) $ be a DFA.
If $u\in Q$, let $I_u$ be defined as: 
$$
		I_u =\{\alpha \in \Sigma^* : u = \delta(s, \alpha)\};
$$
\end{definition}

The \emph{language $\mathcal L(\mathcal A)$ recognized by $\mathcal A$} is defined as $\mathcal L(\mathcal A) = \cup_{u\in F}I_u$.

A classic result in language theory \cite{nerode1958linear} states that the minimum DFA --- denoted with $\mathcal A_{min}$ --- recognizing the language $\mathcal{L}(\mathcal A)$ 
of any DFA $\mathcal A$
is unique.  
The DFA $\mathcal A_{min}$  can be computed from $\mathcal A$ in $O(m\log n)$ time with a classic partition-refinement algorithm due to Hopcroft
\cite{hopcroft1971n}.

\emph{Wheeler automata} were introduced in \cite{gagie:tcs17:wheeler} as a generalization of prefix sorting from strings to labeled graphs. We consider the following particular case:

\begin{definition}[Wheeler DFA]\label{def:WDFA}
A \emph{Wheeler DFA} (WDFA for brevity) \cite{gagie:tcs17:wheeler}
$\mathcal A$ is a DFA for which there exists a \emph{total} order $<\ 
\subseteq Q\times Q$ (called \emph{Wheeler order}) satisfying the following three axioms:

\begin{itemize}
    \item[] \emph{(i)} $s < u$ for every $u\in Q-\{s\}$.
    \item[] \emph{(ii)} If $u' = \delta(u,a)$, $v' = \delta(v,b)$, and $a \prec b$, then $u'<v'$.
    \item[] \emph{(iii)} If $u' = \delta(u,a) \neq \delta(v,a) = v'$ and $u<v$, then $u' < v'$.
\end{itemize}
\end{definition}

The symbol $<$ will indicate both the total order of integers and the Wheeler order among the states of a Wheeler DFA.
The meaning of symbol $<$ will always be clear from the context.
Definition \ref{def:WDFA} defines the Wheeler order in terms of \emph{local} axioms. On DFAs, an equivalent \emph{global} definition is the following \cite{alanko:iac21:wheeler}:

\begin{definition}\label{def: <_A}
Let $u,v$ be two states of a DFA $\mathcal A$.
Let $u <_{\mathcal A} v$ if and only if $(\forall \alpha \in I_u) ( \forall \beta \in I_v) ~(\alpha \prec \beta)$.     
\end{definition}

\begin{lemma}[{\cite{alanko:iac21:wheeler}}]\label{lem_W order1}
    $\mathcal A$ is Wheeler if and only if $<_{\mathcal A}$ is total, if and only if $<_{\mathcal A}$ is the (unique) Wheeler order of $\mathcal A$.
\end{lemma}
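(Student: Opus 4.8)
The plan is to route everything through a single ``path-coherence'' property of Wheeler orders and then read off all the equivalences. First I would establish: if $\mathcal A$ carries a Wheeler order $<$ (Definition~\ref{def:WDFA}), then for all states $u, v$ and all strings $\alpha \in I_u$, $\beta \in I_v$ with $\alpha \preceq \beta$, one has $u \leq v$. This goes by induction on $|\alpha| + |\beta|$. The base case is $\alpha = \epsilon$, which forces $u = s$ and is handled by axiom (i) (and if $\beta = \epsilon$ then $\alpha = \epsilon$, since $\epsilon$ is $\prec$-minimal). For the inductive step, write $\alpha = \alpha' a$ and $\beta = \beta' b$, and let $\bar u, \bar v$ be the states reached from $s$ by $\alpha', \beta'$. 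If $a \prec b$, axiom (ii) gives $u < v$ at once. If $a = b$, then $\alpha' \preceq \beta'$, so the induction hypothesis gives $\bar u \leq \bar v$; thus either $\bar u = \bar v$, whence $u = \delta(\bar u, a) = \delta(\bar v, a) = v$, or $\bar u < \bar v$, whence $u = v$ or, by axiom (iii), $u < v$.

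With path-coherence in hand, I would show that \emph{every} Wheeler order $<$ of $\mathcal A$ coincides with $<_{\mathcal A}$, assuming as usual that $\mathcal A$ is trimmed so that every $I_q$ is nonempty. For the inclusion $<_{\mathcal A}\,\subseteq\,<$: if $u <_{\mathcal A} v$, pick any $\alpha \in I_u$, $\beta \in I_v$; then $\alpha \prec \beta$, so path-coherence gives $u \leq v$, and $u \neq v$ because $<_{\mathcal A}$ is irreflexive (no $\alpha \in I_u$ satisfies $\alpha \prec \alpha$). For the reverse inclusion $<\,\subseteq\,<_{\mathcal A}$: if $u < v$ and $\alpha \in I_u$, $\beta \in I_v$, then $\beta \preceq \alpha$ is impossible --- $\beta = \alpha$ would force $u = v$ by determinism, and $\beta \prec \alpha$ would force $v \leq u$ by path-coherence, both contradicting $u < v$ --- so $\alpha \prec \beta$, i.e.\ $u <_{\mathcal A} v$. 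Consequently, if $\mathcal A$ is Wheeler then its Wheeler order is unique and equals $<_{\mathcal A}$, so $<_{\mathcal A}$ is total; and conversely, $<_{\mathcal A}$ being the unique Wheeler order trivially makes it total and makes $\mathcal A$ Wheeler. The only implication not yet covered is that totality of $<_{\mathcal A}$ forces $\mathcal A$ to be Wheeler.

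For that implication I would verify directly that $<_{\mathcal A}$ satisfies the three axioms. Axiom (i) is immediate: $\epsilon \in I_s$ is $\prec$-minimal, so no $u \neq s$ satisfies $u <_{\mathcal A} s$, and totality forces $s <_{\mathcal A} u$. Axioms (ii) and (iii) share one mechanism. Consider transitions $\delta(u, c_1) = u'$ and $\delta(v, c_2) = v'$ where either $c_1 \prec c_2$ (the situation of axiom (ii)) or $c_1 = c_2$ together with $u <_{\mathcal A} v$ (the situation of axiom (iii)). Picking any $\alpha \in I_u$, $\beta \in I_v$, we get $\alpha c_1 \in I_{u'}$, $\beta c_2 \in I_{v'}$, and $\alpha c_1 \prec \beta c_2$ --- in the first case because the last characters already decide the comparison, in the second because $\alpha \prec \beta$ (from $u <_{\mathcal A} v$) while the last characters are equal. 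This rules out $v' <_{\mathcal A} u'$, so totality yields $u' <_{\mathcal A} v'$, \emph{provided} $u' \neq v'$; in the situation of axiom (iii) the inequality $u' \neq v'$ is part of the hypothesis, so axiom (iii) is established.

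The main obstacle is precisely this proviso in axiom (ii): one must rule out $u' = v'$, i.e.\ rule out a single state being the target of two transitions with distinct labels. I would handle this with the assumption --- standard in the Wheeler-automata literature, and which I would state explicitly --- that $\mathcal A$ is \emph{input-consistent} (all transitions entering a given state carry the same label), which can be enforced by splitting each state into one copy per incoming label, multiplying the number of states by at most $|\Sigma|$; under this assumption distinct incoming labels force distinct targets, so the proviso holds and axiom (ii) follows. Apart from this point the argument is routine bookkeeping with the definitions of $I_q$ and $<_{\mathcal A}$; the path-coherence induction is the only genuinely load-bearing computation.
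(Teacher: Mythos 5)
The paper does not actually prove Lemma~\ref{lem_W order1}; it is stated as a citation to Alanko et al., so there is no in-paper argument for me to compare against. Your proof is correct and, to my knowledge, is essentially the standard route from the cited literature: prove path-coherence of any Wheeler order by induction on total string length, deduce that every Wheeler order coincides with $<_{\mathcal A}$ (giving uniqueness and the implication ``Wheeler $\Rightarrow <_{\mathcal A}$ total''), and then verify the three axioms directly for a total $<_{\mathcal A}$ to close the cycle. The one genuinely non-routine point you flag is correct and worth stressing: totality of $<_{\mathcal A}$ does \emph{not} imply Wheelerness without input-consistency. A three-state trimmed DFA with $\delta(s,a)=v$, $\delta(s,b)=u'$, $\delta(v,a)=u'$ (with $a\prec b$) has $I_s=\{\epsilon\}$, $I_v=\{a\}$, $I_{u'}=\{b,aa\}$, hence $s<_{\mathcal A}v<_{\mathcal A}u'$ is total, yet axiom~(ii) would force $u'<u'$, so no Wheeler order exists. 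The present paper only makes input-consistency explicit in Appendix~B when discussing the partition-refinement algorithm, and in its actual uses of the lemma (Lemma~\ref{lem_W order2}, and the remark at the start of Section~\ref{sec:param-algo}) it only needs the forward direction, where input-consistency follows from axiom~(ii) rather than being a hypothesis; so the lemma is applied safely, but your instinct to surface the assumption is right. The remaining bookkeeping (irreflexivity of $<_{\mathcal A}$ needs every $I_q$ nonempty, hence the trimmed assumption; the base case $\alpha=\epsilon$ forcing $u=s$) is handled correctly.
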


In fact, when a Wheeler order exists for a DFA, this order is unique \cite{alanko:iac21:wheeler} (as opposed to the NFA case).
The class of languages recognized by Wheeler automata is of particular interest:

\begin{definition}[Wheeler language]
A regular language $\mathcal{L}$ is said to be \emph{Wheeler} if and only if there exists a Wheeler NFA $\mathcal A$ such that $\mathcal{L}=\mathcal{L}(\mathcal A)$, if and only if there exists a Wheeler DFA $\mathcal A'$ such that $\mathcal{L}=\mathcal{L}(\mathcal A')$.
\end{definition}

The equivalence between WNFAs and WDFAs was established in \cite{alanko:iac21:wheeler}. 
In the same paper \cite{alanko:iac21:wheeler}, the authors provided a \emph{Myhill-Nerode theorem} for Wheeler languages that is crucial for our results. 
Their result can be stated in terms of the minimum accepting DFA for $\mathcal L$. We first need the following definition:

\begin{definition}[Entanglement \cite{cotumaccio:arxiv22:partIpaper}]
Given a DFA $\mathcal A$, two distinct states $u \neq v$ of $\mathcal A$ are said to be \emph{entangled} if there exists a monotone infinite sequence $\alpha_1\prec\beta_1\prec\cdots\prec\alpha_i\prec\beta_i\prec\alpha_{i+1}\prec\beta_{i+1}\prec\cdots$ (or with reversed sign $\succ$) such that $\alpha_i\in I_u$ and $\beta_i\in I_v$ for every $i\ge 1$.
\label{def: entangle}
\end{definition}

The characterization of Wheeler languages of Alanko et al. \cite{alanko:iac21:wheeler} states that:

\begin{lemma}[\cite{alanko:iac21:wheeler}]
For a DFA $\mathcal A$, $\mathcal{L}(\mathcal A)$ is not Wheeler if and only if there exist entangled states $u$ and $v$ in its minimum DFA $\mathcal A_{min}$.
\label{lem: nonWheeler iff entangled}
\end{lemma}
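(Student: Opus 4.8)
The plan is to obtain the statement as a (mostly bookkeeping) consequence of the monotone‑sequence characterization of Alanko et al.\ recalled in the Introduction, of which the ``entangled states'' condition is a repackaging. First I would fix conventions. Write $\mathcal L=\mathcal L(\mathcal A)$ and $\mathrm{Pref}(\mathcal L)=\{\alpha\in\Sigma^*:\alpha\beta\in\mathcal L\text{ for some }\beta\in\Sigma^*\}$, and take $\mathcal A_{min}$ to be \emph{trimmed} (every state reachable from the source $s$ and co‑reachable to a final state), as is standard in this line of work. Then the sets $I_q$ (Definition~\ref{def:I_q}), ranging over the states $q$ of $\mathcal A_{min}$, partition $\mathrm{Pref}(\mathcal L)$, and every $\alpha\in\mathrm{Pref}(\mathcal L)$ reaches a unique state $\delta_{min}(s,\alpha)$. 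Under these conventions, the characterization reads: $\mathcal L$ is Wheeler iff for every monotone sequence $\gamma_1\prec\gamma_2\prec\cdots$ (or $\succ$‑monotone) of words of $\mathrm{Pref}(\mathcal L)$ the state‑trace $\big(\delta_{min}(s,\gamma_i)\big)_i$ is eventually constant.

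For the ``if'' direction I would argue as follows. Suppose $u\neq v$ are entangled in $\mathcal A_{min}$; by Definition~\ref{def: entangle} there is a monotone sequence $\alpha_1\prec\beta_1\prec\alpha_2\prec\beta_2\prec\cdots$ (the $\succ$ case being symmetric) with $\alpha_i\in I_u$ and $\beta_i\in I_v$. Since $\mathcal A_{min}$ is trimmed, all these words lie in $\mathrm{Pref}(\mathcal L)$, so reading them as a single monotone sequence $\gamma_1\prec\gamma_2\prec\cdots$ in $\mathrm{Pref}(\mathcal L)$, the state‑trace equals $u$ at odd positions and $v$ at even positions, hence is not eventually constant because $u\neq v$. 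By the characterization, $\mathcal L$ is not Wheeler.

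For the ``only if'' direction I would reverse this. If $\mathcal L$ is not Wheeler, the characterization (contrapositive) yields a monotone sequence $\gamma_1\prec\gamma_2\prec\cdots$ of words of $\mathrm{Pref}(\mathcal L)$ whose state‑trace $q_i:=\delta_{min}(s,\gamma_i)$ is not eventually constant. As $\mathcal A_{min}$ has finitely many states, the set $S$ of states occurring infinitely often in $(q_i)_i$ is nonempty; if $S=\{u\}$ were a singleton, every other state would occur only finitely often and $(q_i)_i$ would be eventually constant, a contradiction, so $S$ contains two distinct states $u\neq v$. I would then extract a subsequence greedily, always alternating: $\alpha_1$ is the first $\gamma_i$ with $q_i=u$, $\beta_1$ the first later $\gamma_j$ with $q_j=v$, $\alpha_2$ the first still‑later one reaching $u$, and so on; this is possible precisely because $u$ and $v$ each recur infinitely often. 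The extracted sequence $\alpha_1\prec\beta_1\prec\alpha_2\prec\beta_2\prec\cdots$ is still strictly monotone (being a subsequence) and satisfies $\alpha_i\in I_u$, $\beta_i\in I_v$, so $u$ and $v$ are entangled in $\mathcal A_{min}$.

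Both directions are short; I expect the only genuinely delicate point to be the bookkeeping around $\mathrm{Pref}(\mathcal L)$. The Introduction's characterization quantifies over monotone sequences \emph{inside the prefix‑closure}, whereas entanglement is about the full sets $I_u,I_v$; working with a trimmed $\mathcal A_{min}$ is exactly what makes these coincide, since then $I_q\subseteq\mathrm{Pref}(\mathcal L)$ for every state $q$. (If one instead kept the complete minimal DFA, one would additionally have to check that an entanglement may safely ignore the dead sink — equivalently, that the sink being entangled with a live state already forces non‑Wheelerness — which is why the trimmed convention is the right one to adopt up front.) The remaining ingredients — the pigeonhole step isolating two infinitely recurring states, the greedy alternating extraction, and the symmetric treatment of the $\succ$ case — are routine.
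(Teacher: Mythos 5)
The paper does not actually give a proof of this lemma: it is stated with a citation to Alanko et al.\ \cite{alanko:iac21:wheeler}, and the monotone-sequence characterization you rely on in the introduction is credited to the same source. So there is no internal proof to compare against; what you have written is a derivation showing that the two formulations the paper imports (the prefix-closure monotone-sequence statement and the entanglement statement) are indeed equivalent. Your derivation is correct. The ``if'' direction is immediate once one fixes the trimmed convention, so that $I_q \subseteq \mathrm{Pref}(\mathcal L)$ for every state $q$ and an entangling sequence is automatically a monotone sequence in $\mathrm{Pref}(\mathcal L)$ with a non-convergent state trace. The ``only if'' direction needs exactly the two ingredients you supply: a pigeonhole step to isolate two states $u\neq v$ that recur infinitely often in the trace (the singleton case being ruled out since it would force eventual constancy), followed by a greedy alternating extraction, which preserves strict monotonicity because subsequences of strictly monotone sequences are strictly monotone. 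You are also right to flag the trimmed-vs-complete issue up front rather than sweep it under the rug; in the partial-DFA setting standard for Wheeler automata the trimmed minimum DFA is the right object, and with it the bookkeeping closes cleanly. This is a sound and appropriately economical way to recover the cited lemma from the cited characterization.
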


Lemma \ref{lem: nonWheeler iff entangled} is at the core of our algorithm for recognizing Wheeler languages.

\vspace{-10pt}

\subsection{Infima and suprema strings}

Lemma \ref{lem_W order1} suggests that the Wheeler order can be defined by looking just at the lower and upper bounds of $I_u$ for each state $u\in Q$. Let us define:

\begin{definition}[Infimum and supremum \cite{kop:cpm23:working}]
For a DFA $\mathcal A=(Q,\Sigma,\delta,s,F)$, let $u\in Q$ be a state of $\mathcal A$. The infimum string $\inf I_u$ and supremum string $\sup I_u$ are the greatest lower bound and the least upper bound, respectively, of $I_u$:
\begin{align*}
\inf I_u &= \gamma\in\Sigma^*\cup\Sigma^\omega \mbox{ s.t. } (\forall \beta\in\Sigma^*\cup\Sigma^\omega \mbox{ s.t. }(\forall\alpha\in I_u~\beta\preceq\alpha)~ \beta\preceq\gamma) \\
\sup I_u &= \gamma\in\Sigma^*\cup\Sigma^\omega \mbox{ s.t. } (\forall \beta\in\Sigma^*\cup\Sigma^\omega \mbox{ s.t. } (\forall\alpha\in I_u~\alpha\preceq\beta)~ \gamma\preceq\beta) 
\end{align*}

\end{definition}

Kim et al. \cite{kop:cpm23:working} and Conte et al. \cite{conte2023computing} use the above definition to give yet another equivalent definition of Wheeler order:

\begin{lemma}[\cite{kop:cpm23:working,conte2023computing}]\label{lem_W order2}
Let $u,v$ be two states of a WDFA $\mathcal A$. 
Let $u < v$ if and only if $\sup{I_u}\preceq \inf{I_v}$. Then $<$ is the Wheeler order of $\mathcal A$.
\end{lemma}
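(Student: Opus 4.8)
The plan is to reduce the statement, via Lemma~\ref{lem_W order1}, to a claim about \emph{distinct} states: it suffices to show that for all $u\neq v$ one has $u<_{\mathcal A}v$ if and only if $\sup I_u\preceq\inf I_v$. Indeed, Lemma~\ref{lem_W order1} identifies $<_{\mathcal A}$ with the unique Wheeler order of a WDFA, and a strict order is determined by its behaviour on distinct pairs (the comparison of a state with itself being irrelevant). Throughout I would assume, without loss of generality, that every state of $\mathcal A$ is reachable, so that each $I_u$ is nonempty and $\inf I_u,\sup I_u$ are well defined; here I use that $\preceq$ is a complete linear order on $\Sigma^*\cup\Sigma^\omega$, so that every nonempty set admits a genuine greatest lower bound and least upper bound, and in particular $\inf I_u\preceq\alpha\preceq\sup I_u$ for every $\alpha\in I_u$.

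For the forward direction, assume $u<_{\mathcal A}v$ and fix an arbitrary $\beta\in I_v$. By definition of $<_{\mathcal A}$ we have $\alpha\prec\beta$ for every $\alpha\in I_u$, so $\beta$ is an upper bound of $I_u$, whence $\sup I_u\preceq\beta$ because $\sup I_u$ is the \emph{least} upper bound. Letting $\beta$ range over $I_v$ shows that $\sup I_u$ is a lower bound of $I_v$, and therefore $\sup I_u\preceq\inf I_v$ since $\inf I_v$ is the \emph{greatest} lower bound. (This implication uses only that $\mathcal A$ is a DFA.) For the backward direction, assume $\sup I_u\preceq\inf I_v$ and pick arbitrary $\alpha\in I_u$, $\beta\in I_v$. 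Then $\alpha\preceq\sup I_u\preceq\inf I_v\preceq\beta$, so $\alpha\preceq\beta$; and since $\mathcal A$ is deterministic and $u\neq v$, the sets $I_u$ and $I_v$ are disjoint (a string reaches a unique state), so $\alpha\neq\beta$ and hence $\alpha\prec\beta$. As $\alpha,\beta$ were arbitrary, $u<_{\mathcal A}v$. Combining the two directions with Lemma~\ref{lem_W order1} gives the claim.

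The argument is short and I do not anticipate a genuine obstacle; the care needed is purely definitional. The main points are: (i) that $\inf I_u$ and $\sup I_u$ exist and behave as true infima and suprema --- this is where completeness of $\preceq$ over $\Sigma^*\cup\Sigma^\omega$ and reachability of states are used, and also why $\sup I_u$ or $\inf I_v$ may legitimately be an $\omega$-string even though all elements of $I_u,I_v$ are finite; and (ii) the upgrade from $\alpha\preceq\beta$ to the strict $\alpha\prec\beta$ in the backward direction, which rests on $I_u\cap I_v=\emptyset$ for distinct states (and which is also the only place where the equal-state case is genuinely degenerate, since a singleton $I_u$ would give $\sup I_u=\inf I_u$).
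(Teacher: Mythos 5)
The paper does not prove this lemma; it is imported from \cite{kop:cpm23:working,conte2023computing} without an in-text argument, so there is no ``paper proof'' to compare against. Your blind proof is correct and takes the natural route: reduce via Lemma~\ref{lem_W order1} to showing $u<_{\mathcal A}v\iff\sup I_u\preceq\inf I_v$ for $u\neq v$. The forward direction is the standard least-upper-bound/greatest-lower-bound chase (each $\beta\in I_v$ is an upper bound of $I_u$, so $\sup I_u$ is a lower bound of $I_v$). The backward direction correctly exploits disjointness of $I_u$ and $I_v$ --- guaranteed for distinct states of a DFA --- to upgrade $\alpha\preceq\sup I_u\preceq\inf I_v\preceq\beta$ to the strict $\alpha\prec\beta$ needed by Definition~\ref{def: <_A}. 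Your caveat about the degenerate diagonal case (a singleton $I_u$ makes $\sup I_u\preceq\inf I_u$ hold vacuously, yet $u<u$ must be excluded) is the right observation; the lemma is implicitly understood to define $<$ only on distinct pairs, which is also how the paper uses it when it later tests $\mathcal I(u)\cap\mathcal I(v)\neq\emptyset$ only for $u\neq v$.
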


Following Lemma \ref{lem_W order2}, it is convenient to represent each state $u\in Q$ as an \emph{open} interval $\mathcal{I}(u)=(\inf I_u,\sup I_u)$, i.e., the subset of $\Sigma^* \cup \Sigma^\omega$ containing all strings co-lexicographically strictly larger than $\inf I_u$ and strictly smaller than $\sup I_u$. 
Note that, for two states $u,v\in Q$,  
if $|I_u|,|I_v|>1$ then
$\mathcal{I}(u)\cap \mathcal{I}(v)=\emptyset$ if and only if $\sup I_u\preceq \inf I_v$ or $\sup I_v\preceq \inf I_u$. If $|I_u| = 1$ (analogously for $|I_v| = 1$), then $\mathcal I(u) = \emptyset$ so $\mathcal{I}(u)\cap \mathcal{I}(v)$ is always empty.

Following \cite{kop:cpm23:working}, in the rest of the paper the intervals $\mathcal{I}(u)=(\inf I_u,\sup I_u)$ are encoded as pairs of integers: the co-lexicographic ranks of $\inf I_u$ and $\sup I_u$ in $\{\inf I_u:u\in Q\}\cup\{\sup I_u:u\in Q\}$. Using this representation, the check $\mathcal{I}(u) \cap \mathcal{I}(v) \neq \emptyset$ can be trivially performed in constant time.
The authors of \cite{becker2023sorting} show that the relative co-lexicographic  ranks of all infima and suprema strings of a DFA can be computed efficiently:

\begin{lemma}[{\cite[Sec. 4]{becker2023sorting}}]\label{lem:pruning+sorting}
    Given a DFA $\mathcal{A}=(Q,\Sigma,\delta,s,F)$, we can sort the set $\{\inf I_u:u\in Q\}\cup\{\sup I_u:u\in Q\}$ co-lexicographically in $O(|\delta|\log |Q|)$ time.
\end{lemma}

We conclude this section by mentioning two useful properties of infima and suprema strings which will turn out useful later on in this work.

\begin{lemma}
    Let $u$ be a state of a DFA $\mathcal{A}$, and $\gamma\in\Sigma^*$ be a finite string. Then the following holds:
    \begin{enumerate}
        \item \label{obs: finite inf} If $\inf I_u$ ($\sup I_u$) is finite, then $\inf I_u\in I_u$ ($\sup I_u\in I_u$).
        \item \label{obs: suffix of inf} For any finite suffix $\alpha'$ of $\inf I_u$ or $\sup I_u$, there exists $\alpha\in I_u$ suffixed by $\alpha'$.
        \item \label{obs: singleton} $I_u$ 
is a singleton if and only if $\inf I_u=\sup I_u$. 
        \item \label{obs: alpha exists when gamme not eq infsup} If $\inf I_u\prec \gamma^\omega$, then there exists $\alpha\in I_u$ such that $\alpha\prec\gamma^\omega$; similarly, if $\gamma^\omega\prec\sup I_u$, then there exists $\alpha\in I_u$ such that $\gamma^\omega\prec\alpha$.
    \end{enumerate}
    \label{lem: observation}
\end{lemma}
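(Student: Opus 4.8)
The plan is to prove the four items essentially independently, relying on the definitions of infimum/supremum strings and basic properties of the co-lex order. I will only discuss the $\inf I_u$ cases; the $\sup I_u$ cases are symmetric (reversing the roles of lower/upper bounds).

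\emph{Item \ref{obs: finite inf}.} Suppose $\gamma := \inf I_u$ is finite but $\gamma \notin I_u$. Then $\gamma$ is a lower bound of $I_u$, so every $\alpha \in I_u$ satisfies $\gamma \prec \alpha$ (strict, since $\gamma \notin I_u$). I would argue that among the finitely many ``small'' elements of $I_u$ one can then find a strictly larger lower bound, contradicting that $\gamma$ is the \emph{greatest} lower bound. Concretely: since $\gamma$ is finite, only finitely many strings of length $\le |\gamma|+1$ exist; pick the co-lex smallest $\alpha^\star \in I_u$ (this exists because $I_u$ is bounded below and, after fixing a short enough common suffix, the relevant candidates form a finite set — here one uses that all $\alpha \succ \gamma$ and $\gamma$ is finite, so each such $\alpha$ agrees with $\gamma$ on a bounded-length suffix or is longer). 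Then $\gamma \prec \alpha^\star$, and since there is no string strictly between $\gamma$ and $\alpha^\star$ that could be ``just above'' $\gamma$ while still below all of $I_u$... — more cleanly: $\alpha^\star$ itself is a lower bound of $I_u$ with $\gamma \prec \alpha^\star$, contradicting maximality of $\gamma$. Hence $\gamma \in I_u$. The care needed here is exactly why this is stated as a lemma; I expect this to be the main obstacle and will write the finiteness/minimality argument carefully, possibly by noting that $I_u \subseteq \Sigma^*$ is a set of finite strings all $\succeq \gamma$, and a set of finite strings bounded below in co-lex order that contains a string of length $\le |\gamma|+1$ (or all strings in it are longer and share the suffix pattern of $\gamma$) attains its infimum.

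\emph{Item \ref{obs: suffix of inf}.} Let $\alpha'$ be a finite suffix of $\gamma = \inf I_u$, say $\gamma = \gamma'' \alpha'$ (with $\gamma''$ possibly infinite or empty). I would use the ``pruned automaton'' viewpoint from \cite{kop:cpm23:working}: there is a path ending at $u$ spelling $\gamma$, and truncating it to the last $|\alpha'|$ edges gives a path from some state $w$ to $u$ spelling $\alpha'$; since $\mathcal A$ is reachable (or by taking any word reaching $w$ — such a word exists as $w$ lies on a path from $s$), concatenating yields $\alpha \in I_u$ suffixed by $\alpha'$. Alternatively and more self-containedly: by definition of infimum, for every $N$ there is $\alpha_N \in I_u$ with $\alpha_N$ agreeing with $\gamma$ on the last $N$ characters (otherwise a string slightly above $\gamma$ would be a larger lower bound); taking $N = |\alpha'|$ gives the claim. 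I will use whichever is cleaner given earlier material.

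\emph{Item \ref{obs: singleton}.} If $I_u = \{\alpha\}$ is a singleton then trivially $\alpha$ is both the greatest lower bound and least upper bound, so $\inf I_u = \sup I_u = \alpha$. Conversely, if $\inf I_u = \sup I_u =: \gamma$, then every $\alpha \in I_u$ satisfies $\gamma \preceq \alpha \preceq \gamma$, hence $\alpha = \gamma$, so $I_u = \{\gamma\}$ (nonempty since $u$ is reachable — I note this is the convention, as $I_u = \emptyset$ would make the statement vacuous/ill-posed; if $u$ may be unreachable, handle that case separately). This one is immediate.

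\emph{Item \ref{obs: alpha exists when gamme not eq infsup}.} Suppose $\inf I_u \prec \gamma^\omega$ but every $\alpha \in I_u$ satisfies $\gamma^\omega \preceq \alpha$. Then $\gamma^\omega$ is a lower bound of $I_u$ strictly larger than the greatest lower bound $\inf I_u$ — immediate contradiction. Hence some $\alpha \in I_u$ has $\alpha \prec \gamma^\omega$. The supremum case is symmetric. This is a direct consequence of the definitions.

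Overall, items \ref{obs: singleton} and \ref{obs: alpha exists when gamme not eq infsup} are one-liners from the definitions, item \ref{obs: suffix of inf} follows from the pruned-automaton/path structure, and item \ref{obs: finite inf} — the assertion that a finite infimum is actually attained — is the only part requiring a genuine (though short) argument about finiteness of the relevant candidate set in co-lex order.
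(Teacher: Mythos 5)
The paper disposes of items (1)--(3) simply by citing Observation~8 of \cite{kop:cpm23:working}; the only item it actually proves is (4), so that is where the real comparison lies. For item (4) your argument is correct and genuinely simpler than the paper's. The paper splits into two cases according to whether $\inf I_u$ is finite or infinite and, in each case, invokes items (1) or (2) to \emph{construct} an explicit $\alpha\in I_u$ with $\alpha\prec\gamma^\omega$. You instead argue non-constructively and directly from the definition of infimum: if every $\alpha\in I_u$ satisfied $\gamma^\omega\preceq\alpha$, then $\gamma^\omega$ (a legitimate element of $\Sigma^\omega\subseteq\Sigma^*\cup\Sigma^\omega$) would be a lower bound of $I_u$ strictly exceeding $\inf I_u$, contradicting the fact that every lower bound is $\preceq\inf I_u$. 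This bypasses items (1)--(2) entirely and is the more economical route; the paper's version has the mild advantage of producing a concrete witness, but that is not needed for the statement.

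On the items the paper does not prove: your one-liner for item (3) is fine, and your first argument for item (2) (via the pruned-automaton path) matches the source. Your sketch for item (1), however, has a gap: you argue that $I_u$ attains a co-lex minimum because ``the relevant candidates form a finite set,'' but boundedness below together with finiteness of $\gamma=\inf I_u$ does not bound the lengths of candidate minima (arbitrarily long strings of $I_u$ can lie arbitrarily close to $\gamma$ from above), and in fact the existence of a minimum is essentially equivalent to item (1) itself, so the argument risks circularity. A clean way to close it: in $\Sigma^*\cup\Sigma^\omega$ the immediate co-lex successor of any finite $\gamma$ is $a\gamma$, where $a$ is the smallest letter (nothing lies strictly between $\gamma$ and $a\gamma$, since such a string would have to equal $\mu'\gamma$ with $\epsilon\prec\mu'\prec a$, and no such $\mu'$ exists). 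Hence if $\gamma=\inf I_u\notin I_u$, then every $\alpha\in I_u$ satisfies $\alpha\succeq a\gamma$, so $a\gamma$ is a lower bound strictly above $\inf I_u$ --- a contradiction. Since items (1)--(3) are merely cited in the paper, you are doing extra work here; but if you do include it, replace the hand-waving finiteness claim with this successor argument.
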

\begin{proof}
    (1)-(3) See {\cite[Observation 8]{kop:cpm23:working}}.
    (4) Assume $\inf I_u\prec\gamma^\omega$.
    If $\inf I_u$ is finite, then $\inf I_u\in I_u$ by (\ref{obs: finite inf}) and the claim follows by setting $\alpha=\inf I_u$.
    Let us assume $\inf I_u$ has infinite length.
    Let $\alpha'$ be the shortest suffix of $\inf I_u$ such that $\alpha'\prec\gamma^\omega$ and $\alpha'$ is not a suffix of $\gamma^\omega$; note that $\alpha'$ is finite, otherwise $\alpha'=\inf I_u=\gamma^\omega$ by definition of $\prec$, which contradicts the assumption $\inf I_u\prec\gamma^\omega$. 
    Then by (\ref{obs: suffix of inf}), there exists $\alpha\in I_u$ suffixed by $\alpha'$. By definition of $\alpha'$, any string suffixed by $\alpha'$ is  smaller than $\gamma^\omega$, hence $\alpha\prec\gamma^\omega$. The case with $\gamma^\omega\prec\sup I_u$ is analogous.

\end{proof}

\section{Recognizing Wheeler Languages}

In this section, we present our algorithm to decide if the language accepted by a DFA $\mathcal{A}=(Q,\Sigma,\delta,s,F)$ is Wheeler.
Let $\mathcal{A}_{min}=(Q_{min},\Sigma,\delta_{min},s_{min},F_{min})$ be the minimum-size DFA accepting $\mathcal{L}(\mathcal{A})$.

\begin{definition}[Square automaton]
The \emph{square automaton} 
$\mathcal A_{min}^2=\mathcal{A}_{min}\times\mathcal{A}_{min}=(Q^2_{min}=Q_{min}\times Q_{min},\Sigma,\delta',(s_{min},s_{min}),F^2_{min}=F_{min}\times F_{min})$ 
is the automaton whose states are pairs of states of  $\mathcal{A}_{min}$ and whose transition function is defined as
    $\delta'((u,v),a)=(\delta_{min}(u,a),\delta_{min}(v,a))$ for $u,v\in Q_{min}$ and $a\in\Sigma$.
\end{definition}

We are ready to prove our new characterization of Wheeler languages.
The characterization states that $\mathcal A_{min}^2$ can be used to detect repeated cycles in $\mathcal A_{min}$, and that we can use this fact to check if $\mathcal L(\mathcal A_{min}) = \mathcal L(\mathcal A)$ is Wheeler:

\begin{theorem}
For a DFA $\mathcal{A}$, $\mathcal{L}(\mathcal{A})$ is not Wheeler if and only if $\mathcal A_{min}^2$ contains a cycle $(u_1,v_1) \rightarrow (u_2,v_2) \rightarrow \dots \rightarrow (u_k,v_k) \rightarrow (u_1,v_1)$ such that, for $1\le \forall i \le k$, the following hold: (i) $u_i\neq v_i$ and (ii) $\mathcal{I}(u_i)\cap\mathcal{I}(v_i)\ne\emptyset$.
\label{thm: nonWheeler iff cycle} 
\end{theorem}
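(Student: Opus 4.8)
The plan is to reduce the statement to Lemma~\ref{lem: nonWheeler iff entangled}, i.e.\ to show that $\mathcal{A}_{min}$ has a pair of entangled states if and only if $\mathcal{A}_{min}^2$ has a cycle of the stated kind. The $(\Rightarrow)$ direction (entanglement implies the cycle) is where I expect the real work to be; the $(\Leftarrow)$ direction (the cycle implies entanglement) should be more routine, and I would do it first.

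For $(\Leftarrow)$, suppose $\mathcal{A}_{min}^2$ contains a cycle $(u_1,v_1)\to\cdots\to(u_k,v_k)\to(u_1,v_1)$ with $u_i\neq v_i$ and $\mathcal{I}(u_i)\cap\mathcal{I}(v_i)\neq\emptyset$ for all $i$. First, since $\mathcal{A}_{min}$ is trimmed (every state is reachable), $(u_1,v_1)$ is reachable in $\mathcal{A}_{min}^2$ from the source by some string $\pi$, so $\pi\in I_{u_1}\cap I_{v_1}$; in particular $|I_{u_1}|,|I_{v_1}|>1$ is impossible only if one of them is a singleton, but then $\mathcal{I}$ of that state is empty, contradicting the intersection hypothesis — so both $I_{u_1}$ and $I_{v_1}$ are non-singletons, and by the open-interval remark $\sup I_{u_1}$ and $\inf I_{u_1}$ (resp.\ for $v_1$) are distinct. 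Let $\gamma=\lambda$-label of the cycle from $(u_1,v_1)$ around, a finite string of length $k$. The non-empty intersection $\mathcal{I}(u_1)\cap\mathcal{I}(v_1)$ together with Lemma~\ref{lem: observation}(\ref{obs: alpha exists when gamme not eq infsup}) — applied to pick $\alpha\in I_{u_1}$ and $\beta\in I_{v_1}$ straddling $\gamma^\omega$ in the right direction — lets me produce strings $\alpha_0\in I_{u_1}$, $\beta_0\in I_{v_1}$ with $\alpha_0\prec\gamma^\omega\prec\beta_0$ (or the reversed configuration; handle both). Now append $j$ copies of $\gamma$: since appending $\gamma$ to a string keeps the relative co-lex order with $\gamma^\omega$ while driving the string co-lexicographically toward $\gamma^\omega$, the strings $\alpha_0\gamma^j$ and $\beta_0\gamma^j$ are in $I_{u_1}$, $I_{v_1}$ (the cycle maps $(u_1,v_1)$ to itself) and form a monotone interleaved sequence $\alpha_0\gamma^{j}\prec\cdots\prec\beta_0\gamma^{j}\prec\alpha_0\gamma^{j+1}\prec\cdots$ converging to $\gamma^\omega$. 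That is exactly a witness of entanglement of $u_1\neq v_1$, so by Lemma~\ref{lem: nonWheeler iff entangled}, $\mathcal{L}(\mathcal{A})$ is not Wheeler.

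For $(\Rightarrow)$, suppose $u\neq v$ are entangled in $\mathcal{A}_{min}$ via a monotone interleaved sequence $\alpha_1\prec\beta_1\prec\alpha_2\prec\beta_2\prec\cdots$ with $\alpha_i\in I_u$, $\beta_i\in I_v$ (WLOG the increasing case). The idea is that this sequence has a co-lex limit $\sigma\in\Sigma^\omega$, and reading longer and longer suffixes of $\sigma$ in $\mathcal{A}_{min}^2$ must eventually cycle; the states on that cycle inherit properties (i) and (ii) from the entanglement. Concretely: the sequence is bounded and monotone, so it has a supremum $\sigma$ which is a left-infinite $\omega$-string (it is infinite because it strictly dominates every $\alpha_i$ and every $\beta_i$, none of which it equals). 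For each length $\ell$, the length-$\ell$ suffix $\sigma[{-\ell}..]$ eventually stabilizes as a suffix shared by cofinitely many $\alpha_i$ and cofinitely many $\beta_i$; reading this suffix from the source lands us in well-defined states $u_\ell:=\delta_{min}(s_{min},\sigma[{-\ell}..])$ and $v_\ell:=\delta_{min}(s_{min},\sigma[{-\ell}..])$ — wait, that is not right: the two families of strings $\alpha_i,\beta_i$ need not share a finite suffix. I instead track pairs: set $w_\ell$ to be a long-enough common suffix so that $u_\ell:=\delta_{min}(s_{min},x)$ is the same for all $\alpha_i$ long enough and suffixed by $w_\ell$, but this still conflates $u$ and $v$. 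The correct move is to fix, for each $\ell$, a sufficiently large index $i(\ell)$ and use the length-$\ell$ suffixes of $\alpha_{i(\ell)}$ and $\beta_{i(\ell)}$ separately; these suffixes both converge to $\sigma$, and for $\ell$ large the pair of states reached, $\big(\delta_{min}(s_{min},\alpha_{i(\ell)}[{|\alpha_{i(\ell)}|-\ell+1}..|\alpha_{i(\ell)}|]),\ \delta_{min}(s_{min},\beta_{i(\ell)}[\cdots])\big)$, is not quite well-defined either because reading a suffix from $s_{min}$ need not reconstruct the original run.

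Let me restate the $(\Rightarrow)$ strategy cleanly, since the suffix bookkeeping above is the crux. I work directly in $\mathcal{A}_{min}^2$. From the entangling sequence, for each $i$ let $p_i=(\delta_{min}(s_{min},\gamma_i),\delta_{min}(s_{min},\gamma_i'))$ where $\gamma_i,\gamma_i'$ range over $\alpha$'s and $\beta$'s; more usefully, consider all pairs $(\alpha_i,\beta_j)$ with $i\le j$ and the corresponding runs in $\mathcal{A}_{min}^2$ starting from $(s_{min},s_{min})$. Since $Q_{min}^2$ is finite, among $\{(\delta_{min}(s_{min},\alpha_i),\delta_{min}(s_{min},\beta_i))\}_i = \{(u,v)\}$ — this is constant, $=(u,v)$, so that is not the cycle. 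The cycle must come from the shared suffix structure: because the $\alpha_i$ converge co-lexicographically and live in a DFA with $n$ states, arbitrarily long suffixes of the $\alpha_i$'s are spelled by paths landing in $u$, and similarly long suffixes of the $\beta_i$'s land in $v$; by pigeonhole on $Q_{min}^2$, there is a finite string $\gamma$ and states $(u_1,v_1),\dots,(u_k,v_k)$ with $\gamma=c_1\cdots c_k$, $\delta_{min}(u_i,c_i)=u_{i+1}$, $\delta_{min}(v_i,c_i)=v_{i+1}$ (indices mod $k$), such that $u_k\to u_1$ via a suffix appearing in infinitely many $\alpha_i$ and reaching $u=u_1$, and likewise for $v$. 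Property (i): if some $u_j=v_j$ on the cycle, then following the cycle forward from $(u_j,v_j)$ forces $u_1=v_1$, but $u_1=u\neq v=v_1$. Property (ii): $\gamma^\omega$ (or a tail of it) lies in $\overline{I_{u_i}}$ and in $\overline{I_{v_i}}$ in the sense that it is approached from below by strings of $I_{u_i}$ and from above by strings of $I_{v_i}$ (inherited, via the cycle, from the interleaving $\alpha_i\prec\beta_i\prec\alpha_{i+1}$ at the source, pushed to each $(u_i,v_i)$ by reading prefixes of the cycle); by Lemma~\ref{lem: observation}(\ref{obs: alpha exists when gamme not eq infsup}) this forces $\inf I_{u_i}\preceq\gamma^\omega\preceq\sup I_{v_i}$ and $\inf I_{v_i}\preceq\gamma^\omega\preceq\sup I_{u_i}$, and since $u_i\neq v_i$ the intervals are non-degenerate, giving $\mathcal{I}(u_i)\cap\mathcal{I}(v_i)\ni$ (a neighborhood of) $\gamma^\omega\neq\emptyset$.

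The main obstacle, then, is the $(\Rightarrow)$ direction: extracting from a monotone interleaved sequence of finite strings an actual repeated cycle in $\mathcal{A}_{min}^2$ whose label $\gamma$ satisfies $\gamma^\omega$-approachability from both sides at every vertex of the cycle. This requires (a) a careful pigeonhole/compactness argument to find the period $\gamma$ and the aligned cycles in the two coordinates simultaneously, (b) verifying that the interleaving property $\alpha_i\prec\beta_i\prec\alpha_{i+1}$ is transported to each pair $(u_i,v_i)$ along the cycle so that $\gamma^\omega$ sits strictly between $\inf/\sup$ of $I_{u_i}$ and of $I_{v_i}$ on the appropriate sides, and (c) translating "strictly between on the appropriate sides" into $\mathcal{I}(u_i)\cap\mathcal{I}(v_i)\neq\emptyset$ using the open-interval formalism and Lemma~\ref{lem: observation}. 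I expect part (a) — synchronizing the two coordinates' cycles onto a single period $\gamma$ — to be the most delicate point, likely handled by first passing to a subsequence where the length-$\ell$ suffixes of $\alpha_i$ and of $\beta_i$ have stabilized for each $\ell$, then using König's-lemma-style reasoning on the (finite-degree) reverse-run trees of $\mathcal{A}_{min}$ at $u$ and at $v$ to obtain infinite backward paths spelling a common $\omega$-string $\sigma$, whose eventual periodicity in $\mathcal{A}_{min}^2$ yields $\gamma$.
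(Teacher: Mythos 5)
Your $(\Leftarrow)$ direction contains a genuine error. You propose to pick $\alpha_0\in I_{u_1}$ and $\beta_0\in I_{v_1}$ \emph{straddling} $\gamma^\omega$ (i.e.\ $\alpha_0\prec\gamma^\omega\prec\beta_0$) and then to append powers of $\gamma$ to build an interleaved monotone sequence $\alpha_0\gamma^{j}\prec\beta_0\gamma^{j}\prec\alpha_0\gamma^{j+1}\prec\cdots$. This cannot work: appending $\gamma$ to any string below $\gamma^\omega$ keeps it below (strip the common suffix $\gamma$, using $\gamma^\omega\gamma=\gamma^\omega$), and symmetrically for strings above, so $\alpha_0\gamma^{j+1}\prec\gamma^\omega\prec\beta_0\gamma^{j}$ for every $j$ and the two families never interleave. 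Moreover, a straddle need not even exist — it is perfectly possible that $\sup I_{u_1}=\sup I_{v_1}=\gamma^\omega$ and every finite string in $I_{u_1}\cup I_{v_1}$ lies strictly below $\gamma^\omega$. The paper avoids both problems: from $\mathcal I(u_1)\cap\mathcal I(v_1)\neq\emptyset$ it derives $\max\{\inf I_{u_1},\inf I_{v_1}\}\prec\min\{\sup I_{u_1},\sup I_{v_1}\}$, concludes that at least one side of $\gamma^\omega$ is open for \emph{both} states simultaneously, picks $\alpha,\beta$ on that \emph{same} side, and then uses $\alpha(\gamma^{l})^d$, $\beta(\gamma^{l})^d$ with $l$ large enough that $\max\{|\alpha|,|\beta|\}<l|\gamma|$; those genuinely interleave and converge monotonically to $\gamma^\omega$ from the chosen side. (A minor side note: your opening claim that $(u_1,v_1)$ is reachable in $\mathcal A_{min}^2$ from $(s_{min},s_{min})$ is false — a deterministic square never leaves the diagonal starting from the diagonal — but the fallback argument you give, that a singleton $I$-set would make $\mathcal I$ empty and kill the intersection hypothesis, is correct.)

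For $(\Rightarrow)$ you sketch a compactness/K\"onig-lemma plan (stabilize ever-longer suffixes along a subsequence, extract an $\omega$-string $\sigma$, pigeonhole in $Q_{min}^2$, then transport the interleaving to each vertex of the cycle). This is a genuinely different route from the paper's, which builds the cycle backward one edge at a time: given entangled $(u_0,v_0)$, it shows that among the $a$-predecessors one can always choose an \emph{entangled} pair $(u_1,v_1)$ (otherwise, by a pigeonhole over the finitely many predecessor pairs, the original interleaved sequence $S_0$ could not exist), iterates $|Q_{min}|^2$ times, and closes the cycle by pigeonhole; entanglement at every level directly yields both $u_i\neq v_i$ and $\mathcal I(u_i)\cap\mathcal I(v_i)\neq\emptyset$. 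Your version leaves exactly the hard parts — your (a), synchronizing the two coordinates onto a common period, and (b), carrying $\mathcal I(u_i)\cap\mathcal I(v_i)\neq\emptyset$ along the cycle — unresolved, and the ``choose entangled predecessors'' invariant is a substantially cleaner way to handle both.
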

\begin{proof}
$(\Leftarrow)$
Assume that $\mathcal{A}^2_{min}$ contains such a cycle $(u_1,v_1) \rightarrow (u_2,v_2) \rightarrow \dots \rightarrow (u_k,v_k) \rightarrow (u_1,v_1)$ where $k$ is the cycle length.
Then, by definition of $\mathcal{A}^2_{min}$ there exist cycles $u_1 \rightarrow u_2 \rightarrow \dots \rightarrow u_k \rightarrow u_1$ and $v_1 \rightarrow v_2 \rightarrow \dots \rightarrow v_k \rightarrow v_1$ in $\mathcal{A}_{min}$, both of which are labeled by the same string  $\gamma = \lambda(u_1,u_2)\cdots\lambda(u_{k-1},u_{k}) \lambda(u_{k},u_{1}) = \lambda(v_1,v_2)\cdots\lambda(v_{k-1},v_{k}) \lambda(v_{k},v_{1})$
of length $k$. 

Let $\gamma_1=\max\{\inf I_{u_1},\inf I_{v_1}\}$ and $\gamma_2=\min\{\sup I_{u_1},\sup I_{v_1}\}$. 
First, we claim $\gamma_1\prec\gamma_2$.
To see this, without loss of generality, assume $\inf I_{u_1}\preceq \inf I_{v_1}$. 
Observe $|I_{u_1}|,|I_{v_1}|>1$ because $u$ and $v$ are on two cycles (hence $I_{u_1}$ and $I_{v_1}$ contain an infinite number of strings). By Lemma~\ref{lem: observation}.\ref{obs: singleton} both $\inf I_{u_1}\prec \sup I_{u_1}$ and $\inf I_{v_1}\prec \sup I_{v_1}$ hold. Since $\mathcal{I}(u_1)\cap\mathcal{I}(v_1)\ne\emptyset$, $\inf I_{v_1} \prec \sup I_{u_1}$ also holds.
Then, $\inf I_{u_1} \preceq \inf I_{v_1} \prec \sup I_{u_1},\sup I_{v_1}$. Therefore $\gamma_1=\inf I_{v_1}\prec \min\{\sup I_{u_1},\sup I_{v_1}\} = \gamma_2$.

As a consequence, we can see that at least one of the following must hold: (i) $\gamma_1\prec\gamma^\omega$ and (ii) $\gamma^\omega\prec\gamma_2$; note that the complement of the case (i) is $\gamma^\omega\preceq \gamma_1$, which implies $\gamma^\omega\prec\gamma_2$ because $(\gamma^\omega\preceq)\gamma_1\prec\gamma_2$.
Therefore, by Lemma~\ref{lem: observation}.\ref{obs: alpha exists when gamme not eq infsup}, there must exist $\alpha\in I_{u_1}$ and $\beta \in I_{v_1}$ such that either $\alpha,\beta\prec \gamma^\omega$ or $\gamma^\omega\prec\alpha,\beta$ hold.

Note that it holds $\alpha\ne\beta$ since $\mathcal{A}_{min}$ is deterministic.
Without loss of generality, assume $\alpha\prec\beta$.
We consider the case $\alpha\prec\beta\prec\gamma^\omega$; the other case ($\gamma^\omega\prec \alpha\prec\beta$) is symmetric. 
Let $l$ be any integer such that $\max\{|\alpha|,|\beta|\}< l\cdot|\gamma|$.
Then we can see that, for every $d\ge 0$, the following three properties hold: (i) $\alpha(\gamma^{l})^d\prec\beta(\gamma^{l})^d\prec\alpha(\gamma^{l})^{d+1}\prec\beta(\gamma^{l})^{d+1}$, (ii) $\alpha(\gamma^{l})^d\in I_{u_1}$ (because $\alpha\in I_{u_1}$ and $\gamma$ labels a cycle from $u_1$, so $\delta_{min}(u_1,\gamma^k)=u_1$ for any integer $k\geq 0$) and, similarly, (iii) $\beta(\gamma^{l})^d\in I_{v_1}$. 
Properties (i-iii) imply that 
there is an infinite monotone nondecreasing sequence of strings alternating between $I_{u_1}$ and $I_{v_1}$, i.e., 
$u_1$ and $v_1$ are entangled (Definition~\ref{def: entangle}) and, by Lemma \ref{lem: nonWheeler iff entangled}, $\mathcal L(\mathcal A_{min}) = \mathcal L(\mathcal A)$ is not Wheeler. 

\medskip

$(\Rightarrow)$
Assume that $\mathcal L(\mathcal A_{min}) = \mathcal L(\mathcal A)$ is not Wheeler. By Lemma \ref{lem: nonWheeler iff entangled},  there exist entangled states $u_0 \neq v_0$ in $\mathcal{A}_{min}$ (in particular, $\mathcal I(u_0)\cap \mathcal I(v_0) \neq \emptyset$). 
Without loss of generality, we can assume that there is an infinite nondecreasing sequence $S_0=\alpha_1\prec\beta_1\prec\alpha_2\prec\beta_2\prec\cdots$ such that, for every $i\ge 1$, $\alpha_i\in I_{u_0}$ and $\beta_i\in I_{v_0}$ (the other case with the reversed sign is analogous).

Observe that, since the alphabet is finite, $S_0$  must ultimately (i.e., from a sufficiently large index $i$) contain strings $\alpha_i, \beta_i$ sharing the last character. We can therefore assume without loss of generality that all strings in  $S_0$ end with the same character $a$.
Then, there exist $u_1,v_1$ such that $\delta_{min}(u_1,a)=u_0$ and $\delta_{min}(v_1,a)=v_0$. 
Note that, by the determinism of $\mathcal A_{min}$, it must be $u_1 \neq v_1$. 
Moreover, we can choose two \emph{entangled} such $u_1,v_1$.
To see this, 
let $u_1^1,\dots, u_1^s$ and $v_1^1,\dots, v_1^r$ be the $s$ and $r$ predecessors of $u_0$ and $v_0$, respectively, such that $\delta_{min}(u_1^i,a)=u_0$ and $\delta_{min}(v_1^j,a)=v_0$ for all $1\leq i \leq s$ and $1\leq j \leq r$.
Assume for the purpose of contradiction that $u_1^i$ and $v_1^j$ are not entangled for all pairs $u_1^i, v_1^j$. Then, by definition of entanglement
any monotone sequence 
$\mu_1 \prec \mu_2 \prec \dots \in I_{u_1^i} \cup I_{v_1^j}$ ultimately ends up in \emph{just one} of the two sets: there exists $N\in \mathbb N$ such that either $\mu_N, \mu_{N+1}, \dots \in I_{u_1^i}$ or $\mu_N, \mu_{N+1}, \dots \in I_{v_1^j}$. 
Since this is true for \emph{any} pair $u_1^i$, $v_1^j$, any monotone sequence $\mu_1 \prec \mu_2 \prec \dots \in 
\bigcup_{i=1}^s I_{u_1^i} \cup \bigcup_{j=1}^r I_{v_1^j}$ ultimately ends up in \emph{either} (i) $\bigcup_{i=1}^s I_{u_1^i}$ \emph{or} (ii) $\bigcup_{j=1}^r I_{v_1^j}$. But then, this implies that sequence $S_0$ cannot exist: any monotone sequence $\mu_1a \prec \mu_2a \prec \dots \in I_{u_0} \cup I_{v_0}$ ultimately ends up in either (i) $I_{u_0}$  or  (ii) $I_{v_0}$.

Summing up, we found 
$u_1\neq v_1$ such that $\delta_{min}(u_1,a)=u_0$, $\delta_{min}(v_1,a)=v_0$, and
$u_1, v_1$ are
entangled (in particular, $\mathcal I(u_1)\cap \mathcal I(v_1) \neq \emptyset$).
We iterate this process for $k = |Q_{min}|^2$ times; this yields two paths $u_k \rightarrow u_{k-1} \rightarrow \dots \rightarrow u_0$ and $v_k \rightarrow v_{k-1} \rightarrow \dots \rightarrow v_0$ labeled with the same string of length $k$, with $u_i\neq v_i$ and $\mathcal I(u_i)\cap \mathcal I(v_i) \neq \emptyset$ for all $0\leq i \leq k$. But then, since we chose $k = |Q_{min}|^2$, by the pigeonhole principle there must exist two indices $j \leq i$ such that $(u_i,v_i) = (u_j,v_j)$. 
In particular, there exists $k'\leq k$ such that $u_i \rightarrow u_{i-1} \rightarrow \dots \rightarrow u_{i-k'+1} \rightarrow u_i$ and $v_i \rightarrow v_{i-1} \rightarrow \dots \rightarrow v_{i-k'+1} \rightarrow v_i$ are two cycles of the same length $k'$, labeled with the same string, such that $u_t\neq v_t$ and $\mathcal I(u_t)\cap \mathcal I(v_t) \neq \emptyset$ for all indices $i-k'+1 \leq t \leq i$. This yields our main claim.  \qed
\end{proof}

\section{The algorithm}

Theorem \ref{thm: nonWheeler iff cycle} immediately gives a quadratic algorithm for \textsc{WheelerLanguageDFA}:
\begin{enumerate}
    \item Compute $\mathcal{A}_{min}=(Q_{min},\Sigma,\delta_{min},s_{min},F_{min})$ by Hopcroft's algorithm \cite{hopcroft1971n}.
    \item On $\mathcal A_{min}$, compute intervals $\mathcal I(u)$ for each $u\in Q_{min}$, using \footnote{
    In Appendix \ref{app:assumptions partition refinement} we discuss more in detail how to apply \cite[Sec. 4]{becker2023sorting} on $\mathcal A_{min}$.} \cite[Sec. 4]{becker2023sorting}.
    \item Compute $\mathcal A_{min}^2$.
    \item Remove from $\mathcal A_{min}^2$ all states $(u,v)$ 
    (and incident transitions) such that either $u=v$ or $\mathcal I(u) \cap \mathcal I(v) = \emptyset$. Let $\mathcal {\hat A}_{min}^2$ be the resulting pruned automaton. 
    \item Test acyclicity of $\mathcal {\hat A}_{min}^2$. 
    If $\mathcal {\hat A}_{min}^2$ is acyclic, 
    return \texttt{"$\mathcal{\mathcal L(\mathcal A)}$ is Wheeler"}. Otherwise, return \texttt{"$\mathcal{\mathcal L(\mathcal A)}$ is not Wheeler"}.
\end{enumerate}

Since, by its definition,  $\mathcal A_{min}$ cannot be larger than  $\mathcal A$, in the rest of the paper we will for simplicity assume that $\mathcal A_{min}$ has $n$ nodes and $m$ transitions. Steps (1) and (2) run in $O(m\log n)$ time. 
Note that, for each transition $\delta_{min}(u,a)=u'$ and for each node $v\neq u$, by the determinism of $\mathcal A_{min}$ there exists at most one transition $\delta_{min}(v,a)=v'$ labeled with $a$ and originating in $v$; such a pair of transitions define one transition of $\mathcal A_{min}^2$. It follows that the number of transitions (thus the size) of $\mathcal A_{min}^2$ is $O(mn)$, therefore steps (3-5) run in $O(mn)$ time (acyclicity can be tested in $O(|\mathcal A_{min}^2|)$ time using, for example, Kahn's topological sorting algorithm). Overall, the algorithm runs in $O(mn)$ time.

\subsection{A parameterized algorithm}
\label{sec:param-algo}
Our algorithm can be optimized by observing that we can directly build $\mathcal {\hat A}_{min}^2$, and that this automaton could be much smaller than $\mathcal A_{min}^2$. 
For example observe that, if $\mathcal A_{min}$ is Wheeler, then $\mathcal I(u) \cap \mathcal I(v) = \emptyset$ for all states $u\neq v$ of $\mathcal A_{min}$ (see Definition \ref{def: <_A} and Lemma \ref{lem_W order1}), so $\mathcal {\hat A}_{min}^2$ is empty. 
As a matter of fact, we show that the size of $\mathcal {\hat A}_{min}^2$ depends on the \emph{width} of the (partial \cite{cotumaccio:soda21:psortable}) order $<_{\mathcal A_{min}}$, i.e., the size of the largest antichain:

\begin{definition}[\cite{cotumaccio:soda21:psortable}]\label{def:colex width}
   The \emph{co-lex width} $width(\mathcal A)$ of a DFA $\mathcal A$ is the width of the order $<_{\mathcal A}$ defined in Definition \ref{def: <_A}.
\end{definition}

The co-lex width is an important measure parameterizing problems such as pattern matching on graphs and compression of labeled graphs \cite{cotumaccio:arxiv22:partIpaper,cotumaccio:soda21:psortable}. 
Note that $width(\mathcal A_{min})=1$ if and only if $\mathcal A_{min}$ is Wheeler. In Appendix \ref{appendix:parameterized} we prove:

\begin{lemma}\label{lem:size of pruned square A}
    Let $p = width(\mathcal A_{min})$. Then, $\mathcal {\hat A}_{min}^2$ has at most $2n(p-1)$ states and at most $2m(p-1)$ transitions and can be built from $\mathcal A_{min}$ in $O(mp)$ time.
\end{lemma}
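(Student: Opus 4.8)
The plan is to prove the lemma in three stages: (1) pin down exactly which pairs of states survive the pruning that defines $\mathcal{\hat A}^2_{min}$, (2) extract a single structural fact about the intervals $\mathcal I(u)$ that yields both size bounds at once, and (3) read off an $O(mp)$-time construction from that same fact. \emph{Stage 1 (which pairs survive).} First I would note that, for $u\neq v$, one has $u<_{\mathcal A_{min}}v$ if and only if $\sup I_u\preceq\inf I_v$: the ``only if'' direction is immediate from Definition~\ref{def: <_A} and the definitions of $\inf$/$\sup$, the ``if'' direction because $I_u\cap I_v=\emptyset$ in a DFA forbids the two extremes from coinciding. Combining this with the discussion following Lemma~\ref{lem_W order2}, a pair $(u,v)$ with $u\neq v$ is kept (i.e.\ $\mathcal I(u)\cap\mathcal I(v)\neq\emptyset$) if and only if $|I_u|,|I_v|>1$ and the open intervals $\mathcal I(u)=(\inf I_u,\sup I_u)$ and $\mathcal I(v)$ overlap, and in that case $\{u,v\}$ is an antichain of $<_{\mathcal A_{min}}$. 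Hence the unordered kept pairs are exactly the edges of the overlap graph $G$ of the family $\{\mathcal I(u): |I_u|>1\}$ of open intervals, and since every clique of $G$ is an antichain of $<_{\mathcal A_{min}}$ we get $\omega(G)\le p$.

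\emph{Stage 2 (the structural fact and the two bounds).} Order the vertices of $G$ by $\sup I_u$, ascending, breaking ties by state index. I claim that for each $u$ its \emph{forward neighbourhood} $N^+(u)$ --- the $G$-neighbours of $u$ that come after $u$ in this order --- is a clique of $G$, hence $|N^+(u)|\le\omega(G)-1\le p-1$. Indeed, if $v,v'\in N^+(u)$ then both overlap $\mathcal I(u)$, so $\inf I_v,\inf I_{v'}\prec\sup I_u$, while $\sup I_v,\sup I_{v'}\succeq\sup I_u$; therefore $\max(\inf I_v,\inf I_{v'})\prec\sup I_u\preceq\min(\sup I_v,\sup I_{v'})$, i.e.\ $\mathcal I(v)$ and $\mathcal I(v')$ overlap. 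Consequently $|E(G)|=\sum_u|N^+(u)|\le n(p-1)$, so $\mathcal{\hat A}^2_{min}$ has at most $2|E(G)|\le 2n(p-1)$ states. For transitions, note that by determinism a state $(u,v)$ of $\mathcal{\hat A}^2_{min}$ has out-degree at most $\min(\mathrm{outdeg}_{\mathcal A_{min}}(u),\mathrm{outdeg}_{\mathcal A_{min}}(v))$ (one transition per label defined at both $u$ and $v$), so the number of transitions is at most $\sum_{(u,v)\text{ kept}}\min(\mathrm{outdeg}(u),\mathrm{outdeg}(v))=2\sum_{\{u,v\}\in E(G)}\min(\mathrm{outdeg}(u),\mathrm{outdeg}(v))$. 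Charging each edge $\{u,v\}$ to its earlier endpoint $u$ (so that $v\in N^+(u)$) and bounding its contribution by $\mathrm{outdeg}(u)$ gives a total of at most $\sum_u (p-1)\,\mathrm{outdeg}(u)=(p-1)m$, hence at most $2m(p-1)$ transitions.

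\emph{Stage 3 (construction in $O(mp)$ time).} Assuming $\mathcal A_{min}$ comes annotated with the integer-pair encodings of the $\mathcal I(u)$ (available from Lemma~\ref{lem:pruning+sorting} in $O(m\log n)$ time, which is within the overall bound of Theorem~\ref{thm:check wheeler}), I would (a) enumerate all kept states by a single left-to-right sweep over the $2n$ endpoint ranks, maintaining the list of currently open intervals and reporting each overlap exactly once, in time $O(n+|E(G)|)=O(np)$; then (b) for each kept ordered pair $(u,v)$ scan the out-transitions of whichever of $u,v$ has the smaller out-degree in $\mathcal A_{min}$, look up the matching transition of the other endpoint, and emit the resulting transition of $\mathcal{\hat A}^2_{min}$ whenever its target is again a kept pair (a constant-time test on the stored interval ranks). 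By the estimate of Stage~2 this second phase costs $\sum_{(u,v)\text{ kept}}\min(\mathrm{outdeg}(u),\mathrm{outdeg}(v))=O(mp)$, so the total is $O(mp)$.

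\emph{Main obstacle.} The delicate point is the transition count in Stage~2: the naive approach of charging each surviving transition to its first coordinate fails, since a single interval can overlap $\Theta(n)$ others even when $p$ is small, so $\mathrm{outdeg}(u)\cdot\deg_G(u)$ need not sum to $O(mp)$. Ordering the states by $\sup I_u$, observing that forward neighbourhoods are cliques, and charging each edge to its earlier endpoint while using $\min(\mathrm{outdeg}(u),\mathrm{outdeg}(v))\le\mathrm{outdeg}(u)$ is exactly what forces both size bounds and the running time down to the stated values; the same ordering is also what lets the construction avoid an extra logarithmic factor in the enumeration of kept states.
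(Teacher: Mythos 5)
Your proposal is correct and reaches the same bounds, but by a genuinely different route. For the state count, the paper translates the open intervals to closed ones via a lemma of Kim et al.\ and then cites a Tur\'an-type bound of Abbott and Hanson for closed-interval graphs with no $K_{p+1}$; your argument is self-contained and more elementary: order the states by $\sup I_u$, observe that forward neighbourhoods $N^+(u)$ are cliques (and hence antichains of $<_{\mathcal A_{min}}$), deduce $|N^+(u)|\le p-1$, and read off $|E(G)|\le n(p-1)$. For the transition count, the paper fixes a label $a$, applies the same interval-graph bound to the subset $Q_a$ of states with an $a$-out-edge, and sums $2|Q_a|(p-1)$ over $a$ using $\sum_a|Q_a|=m$; you instead bound the out-degree of a kept pair $(u,v)$ by $\min(\mathrm{outdeg}(u),\mathrm{outdeg}(v))$ and then charge each $G$-edge to its earlier endpoint, so that the bound $\min\le\mathrm{outdeg}(u)$ yields $\sum_u(p-1)\,\mathrm{outdeg}(u)=(p-1)m$. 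Both give $2m(p-1)$, and the subtle step you correctly flag --- bounding $\min$ by the out-degree of the \emph{earlier} endpoint, so that the $|N^+(u)|\le p-1$ clique bound can be applied --- is exactly what makes the charging work. The construction also differs in flavour: the paper sorts by $\inf I_u$ and runs a per-label two-pointer sweep over arrays $L_a$, while you do an endpoint sweep to enumerate kept pairs and then, per pair, scan the smaller-degree endpoint's out-transitions with an $O(1)$ interval-overlap test on the target. Both run in $O(mp)$; your approach buys a citation-free size bound at the cost of a slightly more delicate charging argument, whereas the paper's per-label decomposition keeps the counting and the algorithm structurally parallel.
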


The intuition behind Lemma \ref{lem:size of pruned square A} is that $\mathcal {\hat A}_{min}^2$ contains only states $(u,v)$ such that $\mathcal I(u) \cap \mathcal I(v) \neq \emptyset$. By Lemma \ref{lem_W order2}, this holds if and only if $u$ and $v$ are incomparable by the order  $<_{\mathcal A_{min}}$. Since the width of this order is (by definition) $p$, the bounds follow easily. 
To build $\mathcal {\hat A}_{min}^2$, we sort the states of $\mathcal {\hat A}_{min}$ by the strings $\inf I_u$ and observe that incomparable states are adjacent in this order. It follows that we can easily build $\mathcal {\hat A}_{min}^2$ in time proportional to its size, $O(mp)$. The details of this algorithm can be found in Appendix \ref{appendix:parameterized}.
Theorem \ref{theorem: main result} follows.

\begin{figure}[b!]
    \centering
    \begin{tabular}{cc}
 	\includegraphics[width=0.44\textwidth, trim={5.5mm 5.5mm 5.0mm 5.5mm}, clip]{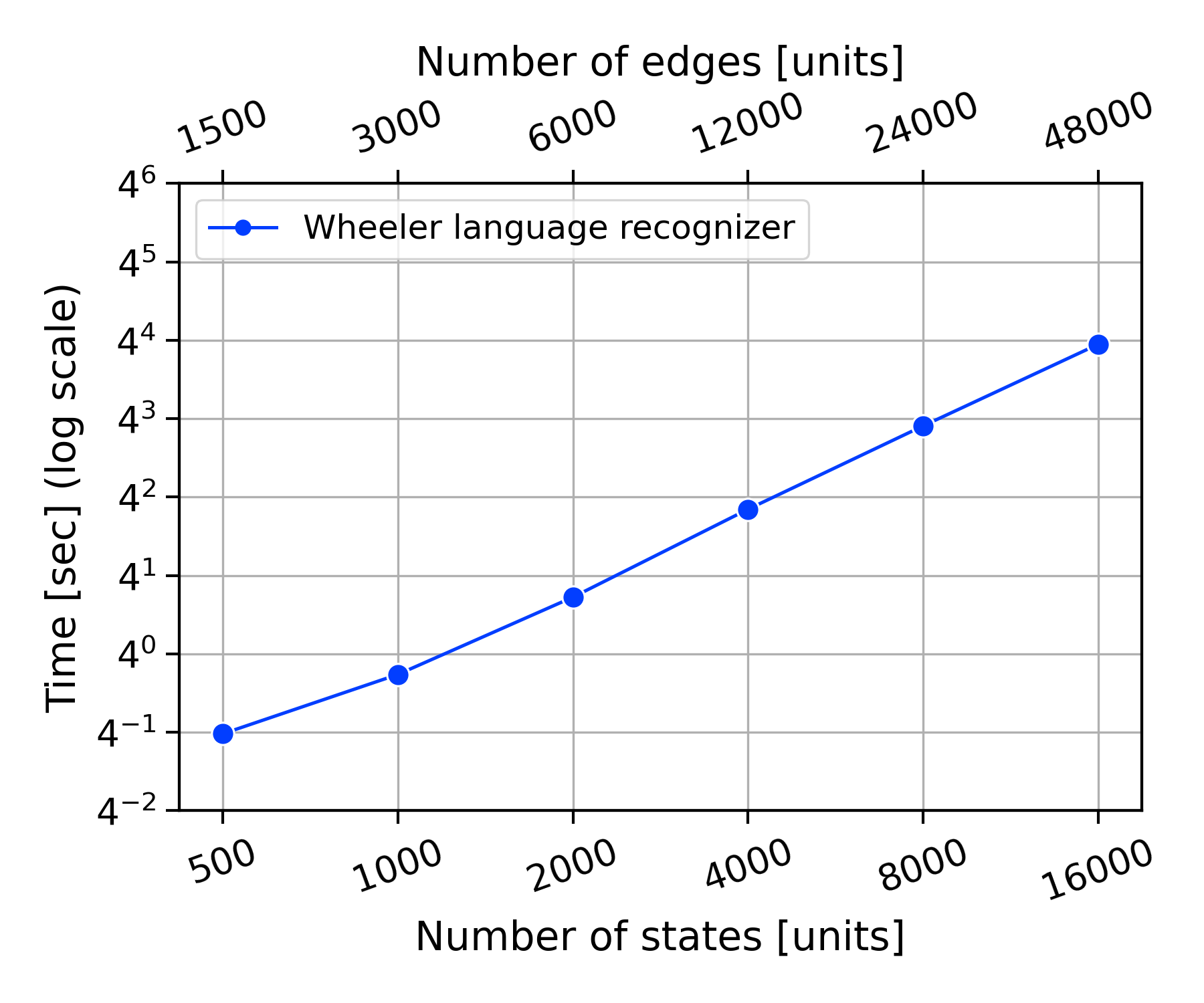}
    \hspace{3mm}&
    \includegraphics[width=0.44\textwidth, trim={5.5mm 5.5mm 5.0mm 5.5mm}, clip]{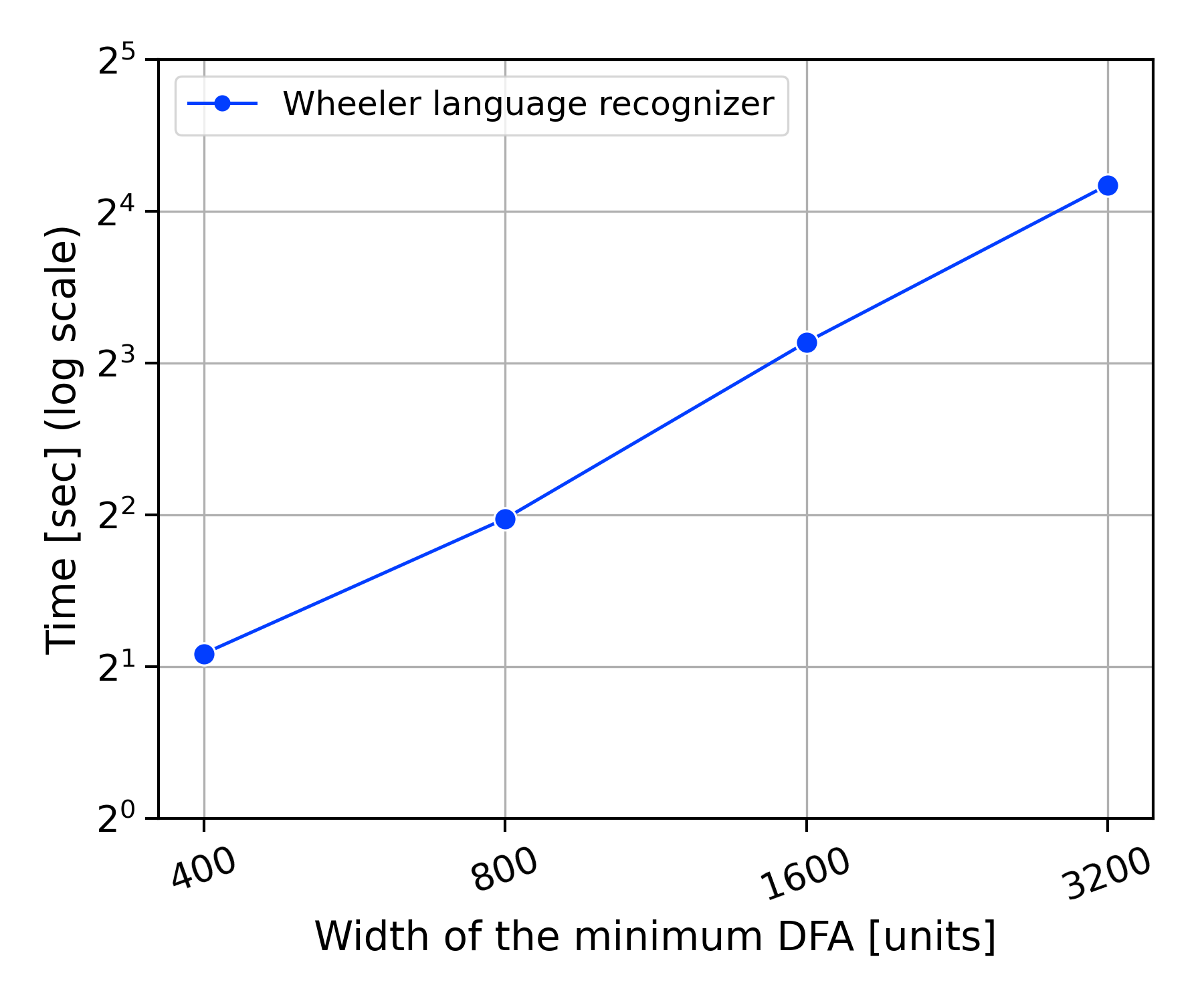}\\
    (a)&(b)\\
    \end{tabular}
    \caption{Wall clock time for our algorithm on different random DFA datasets (a) different $n$, $m=3n$, and $p$ is similar to $n$; (b) different $p$ with fixed $m$.
    }
    \label{fig:exp1}
\end{figure}

\subsubsection*{Implementation}\label{sec:implementation}

We implemented the algorithm of Theorem \ref{theorem: main result}. The code is available at \url{http://github.com/regindex/Wheeler-language-recognizer}. It takes in input either a regular expression or a DFA and checks if the recognized language is Wheeler. We tested our algorithm on two random DFA datasets: (i) one with different combinations of number of states and transitions where $n=\{500\cdot2^i: i=0,\ldots,5\}$ and $m=3n$ to show the quadratic running time, and (ii) the other with a fixed number of transitions, $m=16\cdot 10^3$, and different widths $p=\{400,800,1600,3200\}$ of the minimum DFAs to show the running time with respect to $p$. Our experiments were run on a server with Intel(R) Xeon(R) W-2245498 CPU @ 3.90GHz with 8 cores and 128 gigabytes of RAM running Ubuntu 18.04 LTS 64-bit. As expected, our experimental results show that the running time grows linearly in $mp$. It is worth noting that, on our input instances in the first dataset, $p$ is roughly similar to $n$, and we double $n$ at each step, the running time shows a quadratic growth (Fig. \ref{fig:exp1}(a)). On the other hand, when the number of transitions is fixed, the running time grows linearly to the width $p$ of the minimum DFA (Fig. \ref{fig:exp1}(b)). This can be measured from the slopes of the fitted lines on the log-log plots, which are 2.03 and 1.04, respectively.

\section{A matching conditional lower bound}
In this section, we show that an algorithm for \textsc{WheelerLanguageDFA} with running time $O(m^{2-\eta})$, yields an algorithm for the Orthogonal Vectors problem (see Definition~\ref{def:OV}) with running time $O(N^{2-\eta}\poly(d))$, thus contradicting SETH. This is our second main theorem (Theorem~\ref{thm:lower bound}) formulated
in the introduction. We prove this theorem using Theorem~\ref{thm: nonWheeler iff cycle} and the following proposition, which reduces an instance of the OV problem with two sets of $N$ $d$-dimensional vectors each into an instance of our problem with a minimum DFA of size $\Theta(Nd)$.
\begin{proposition}\label{prop: reduction}
    For an instance of the OV problem, we can in $O(N(d+\log N))$ time construct a DFA $\mathcal A$ 
    with $m\in O(N(d+\log N))$ edges
    that is minimum for its language $\mathcal L(\mathcal A)$ such that the OV instance is a YES-instance if and only if $\mathcal{A}^2$ contains a cycle $(u_1,v_1) \rightarrow (u_2,v_2) \rightarrow \dots \rightarrow (u_k,v_k) \rightarrow (u_1,v_1)$ such that, for $1\le \forall i \le k$, the following hold: (i) $u_i\neq v_i$ and (ii) $\mathcal{I}(u_i)\cap\mathcal{I}(v_i)\ne\emptyset$.
\end{proposition}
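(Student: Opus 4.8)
The plan is to adapt the Orthogonal Vectors reduction of Equi et al.~\cite{EquiICALP19} from pattern matching on labelled graphs to our cyclic setting, using Theorem~\ref{thm: nonWheeler iff cycle} as the bridge between the combinatorial object (a ``bad'' cycle in $\mathcal{A}^2$) and the language property. Given an OV instance $A,B\subseteq\{0,1\}^d$ with $|A|=|B|=N$ (Definition~\ref{def:OV}; after discarding the trivial case $\mathbf 0\in A\cup B$, which is a YES-instance), I would build a DFA $\mathcal A$ over a constant-size alphabet (binarised with $O(1)$ blow-up if the binary bound of Theorem~\ref{thm:lower bound} is wanted) from three parts: (1) a \emph{selector}, a trie of depth $\lceil\log 2N\rceil$ routing the source to one of $2N$ \emph{entry} states, one per vector of $A\cup B$ --- this contributes the $\log N$ term; (2) for each vector, one \emph{non-simple cycle} of length $\Theta(d)$ hanging off its entry state --- this contributes the $d$ term; the cycle $C_a$ of $a\in A$ is \emph{flexible}, advancing on both value letters at coordinate $j$ when $a[j]=0$ and on a single distinguished letter when $a[j]=1$, so that the set of words labelling a full loop of $C_a$ is exactly $\mathrm{Compat}(a)^{*}$ where $\mathrm{Compat}(a)$ is the set of coordinate-strings that are $0$ wherever $a$ is $1$ (interleaving a fixed marker letter between coordinates forbids all-one-letter loop words), whereas the cycle $C_b$ of $b\in B$ is \emph{rigid}, with the unique loop word $\mathrm{enc}(b)$ faithfully spelling $b$; (3) a \emph{recovery} structure collecting all ``forbidden'' transitions and routing them, through an acyclic gadget, to a final state that returns to the source, so that $\mathcal A$ is trim (its language is prefix-dense, hence its minimal DFA has no dead sink) and all intervals there stay finite. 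The automaton has $O(N(d+\log N))$ states and transitions and is built in the same time.

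The first key step is a Fine--Wilf periodicity argument: since every vector-cycle has the same length, any word labelling a common loop of two vector-cycles is built, from both sides, out of length-$\Theta(d)$ blocks that must coincide; hence $C_a$ and $C_b$ admit a common non-empty loop word iff $\mathrm{enc}(b)\in\mathrm{Compat}(a)$, i.e.\ iff $a^{T}b=0$; two distinct rigid cycles $C_b,C_{b'}$ never do, and two non-orthogonal $C_a,C_b$ never do either --- whereas two flexible cycles $C_a,C_{a'}$ \emph{always} share the loop word obtained by setting every value to $0$, a case that must instead be killed by condition (ii). For an orthogonal pair $a\perp b$, reading $\mathrm{enc}(b)$ simultaneously in $C_a$ and $C_b$ yields a cycle $(u_1,v_1)\to\dots\to(u_k,v_k)\to(u_1,v_1)$ of $\mathcal{A}^2$ whose states satisfy (i) ($u_i\ne v_i$, since the two cycles live in disjoint parts of $\mathcal A$), so that only (ii) remains to be checked for this cycle.

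Verifying condition (ii) --- equivalently, pinning down exactly which pairs of states have intersecting co-lex intervals --- is the main obstacle and the bulk of the work. My plan is to compute $\inf I_u$ and $\sup I_u$ explicitly for every state $u$ directly from the construction (these are short finite words or eventually-periodic $\omega$-strings read off the selector and the extremal loop words, in the spirit of Lemma~\ref{lem: observation}), and then tune the layout of the trie so that: (a) for every orthogonal pair $a\perp b$ and every position $i$ along the product cycle, $\mathcal I(u_i)\cap\mathcal I(v_i)\ne\emptyset$, arranging that each flexible cycle $C_a$, whose loop language is large, spreads its state intervals wide enough to cover the narrow intervals of every rigid $C_b$ with $\mathrm{enc}(b)\in\mathrm{Compat}(a)$; (b) distinct flexible cycles $C_a,C_{a'}$ are separated in co-lex order at some position, so their spurious common-loop cycle fails (ii) there; and (c) all states of the recovery structure and the source have intervals bounded away from those of the vector-cycle states --- possible because forbidden transitions end in a distinguished letter, so recovery and source strings occupy a controlled finite co-lex range, while every vector-cycle state is reachable only through strings containing the marker letter and so its interval stays away from the extremal $\omega$-strings that the recovery loops would otherwise produce. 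Since every cycle of $\mathcal{A}^2$ projects to a pair of closed walks of $\mathcal A$ with a common label, and the only states of $\mathcal A$ lying on cycles are the vector-cycle states and the recovery states, (a)--(c) together leave the orthogonal $A$--$B$ product cycles as the only cycles of $\mathcal{A}^2$ satisfying both (i) and (ii).

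Finally I would check, via the Myhill--Nerode theorem, that $\mathcal A$ is its own minimal DFA (trie states distinguished by depth, vector-cycle states by their position together with a short per-vector signature appended to each cycle, recovery states by their distance to acceptance), so that Theorem~\ref{thm: nonWheeler iff cycle} applies verbatim and ``$\mathcal{A}^2$ contains a cycle satisfying (i) and (ii)'' becomes equivalent to ``some $a\in A$, $b\in B$ are orthogonal'' --- which is exactly Proposition~\ref{prop: reduction}; Theorem~\ref{thm:lower bound} then follows by composing this reduction with the classical OV hardness under SETH. The delicate point, as flagged, is step (c) --- keeping the recovery structure from contaminating the co-lex intervals of the vector-cycle states --- together with the explicit interval computation underlying step (a).
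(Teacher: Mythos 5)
Your proposal follows the broad arc of the paper's reduction (selector trie $\to$ per-vector non-simple cycles with a synchronising marker $\to$ in-tree collector guaranteeing minimality), and correctly identifies that Theorem~\ref{thm: nonWheeler iff cycle} is the bridge. However, there are two genuine gaps that your own text flags as ``delicate'' and then leaves open, and both are precisely the places where the paper's design choices make the difficulties vanish.

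First, your flexible/rigid asymmetry (all $A$-cycles flexible with loop language $\mathrm{Compat}(a)^{*}$, all $B$-cycles rigid with a unique loop word) creates spurious cycles in $\mathcal A^2$ from pairs $C_a,C_{a'}$ with $a\neq a'$, since these always share the all-zeros loop word. You plan to kill these via condition~(ii), by tuning the trie so the two flexible cycles become co-lex separated at some position, but you give no mechanism for this, and it is not clear one exists with the stated budget: the argument would need to apply to all $\binom{N}{2}$ pairs simultaneously. The paper sidesteps this entirely by making \emph{both} families hybrid: $C_i^A$ is rigid on the $d$ bits of $a_i$ (so, assuming the $a_i$'s distinct, no two $A$-cycles ever share a loop word) and flexible on $\ell$ trailing bits, while $C_j^B$ is flexible where $b_j$ has a $0$, rigid where it has a $1$, and rigid on a trailing index-encoding $\rho(j)$ (so no two $B$-cycles ever share a loop word). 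With this design the \emph{only} label-matching cycle pairs in $\mathcal A^2$ (after synchronising with $\#$) are orthogonal $A$--$B$ pairs, so condition~(ii) never has to do any filtering work on spurious pairs, and Lemma~\ref{lem: cycles AB} can be proved by straightforward inspection.

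Second, your ``recovery structure'' catching forbidden transitions is unnecessary and introduces a contamination risk you cannot dismiss by hand-waving about a ``distinguished letter'': the whole point of condition~(ii) is that co-lex intervals are determined globally by \emph{all} of $I_u$, and adding new paths back to the source can widen intervals of cycle states unpredictably. In this paper DFAs have partial transition functions ($m=|\delta|$ counts only defined transitions), so no completion gadget is needed; the paper simply leaves missing transitions undefined. Relatedly, your step~(a) needs a concrete realisation: the paper achieves $\mathcal I(u_i)\cap\mathcal I(v_i)\neq\emptyset$ by placing an intermediate layer $I$ so that each $\hat a_i$ is reached via two paths suffixed $00$ and $11$ while each $\hat b_j$ is reached via one path suffixed $01$; then for orthogonal $a_r\perp b_s$ the common loop word $\gamma$ yields witnesses $0\rho(r)000\gamma[1..i]\prec 1\rho(s)010\gamma[1..i]\prec 0\rho(r)110\gamma[1..i]$ in $I_{u_i},I_{v_i},I_{u_i}$ respectively. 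Your plan gestures at ``computing $\inf I_u,\sup I_u$ explicitly'' and ``tuning the layout,'' but this is exactly the work the $I$-gadget is doing, and without it the proposal does not go through.

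Minor remarks: your Myhill--Nerode minimality argument (distinguishing states by depth, position, signature) is plausible but heavier than the paper's one-line observation that the reversed automaton is deterministic, which immediately separates all states; and your Fine--Wilf periodicity argument is unnecessary once a single $\#$ synchronises the cycles, since cycles then have exactly one occurrence of $\#$ per period.
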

Once this proposition is established, we can take an OV instance with sets of size $N$ containing vectors of dimension $d \in \omega(\log N)$ and construct the DFA $\mathcal A$ of size $\Theta(m) = \Theta(N(d+\log N)) = \Theta(Nd)$. Now assume that we can solve \textsc{WheelerLanguageDFA} in $O(m^{2-\eta})$ on $\mathcal A$. Using Theorem~\ref{thm: nonWheeler iff cycle} and Proposition~\ref{prop: reduction}, we can thus solve the OV instance in $O((Nd)^{2-\eta}) = O(N^{2-\eta}\poly(d))$ time, as the OV instance is a YES instance if and only if the language recognized by $\mathcal A$ is not Wheeler. This shows Theorem~\ref{thm:lower bound}.
The rest of this section is dedicated to illustrate how we prove Proposition~\ref{prop: reduction}. The details are deferred to Appendix~\ref{appendix: reduction}. 

\subsubsection*{Construction of $\mathcal A$}
For a given instance $A=\{a_1,\ldots, a_N\}$ and $B=\{b_1,\ldots, b_N\}$ of the OV problem, we build a DFA $\mathcal{A}= (Q,\Sigma,\delta,s,F)$ with the properties in Proposition~\ref{prop: reduction} by adapting a technique of Equi et al.~\cite{EquiICALP19}. We first notice that we can, w.l.o.g., make the following assumptions on the OV instance: (1) The vectors in $A$ are distinct, (2) $N$ is a power of two, say $N= 2^\ell$. 
We describe the construction of $\mathcal A$ based on a small example, while the general description is deferred to Appendix~\ref{appendix: reduction}. We let $\Sigma=\{0,1,\#\}$. Later we show how to reduce the alphabet's size to 2.  

\begin{figure}[ht]
  \centering{
    \resizebox{0.9\columnwidth}{!}{
\begin{tikzpicture}[
          scale=.6,
          ->,
          >=stealth',
          shorten >=1pt,
          auto,
          semithick,
          every node/.style={minimum size=8mm}
        ]

        \begin{scope}[xshift=0cm]
        	\node at (4, 5.85) {\LARGE $C^A_1:\quad a_1 = 110$};
            \node at (4, 0)        {\large $110\ \{0,1\}\{0,1\}\ \# $};
        	\node[circle, draw] (A)   at (0,0)         {$\hat a_1$};
        	\node[circle, draw] (B)   at (2, -3.464)   {$a_1^1$};
        	\node[circle, draw] (C)   at (6, -3.464)   {$a_1^2$};
        	\node[circle, draw] (D)   at (8, 0)        {$a_1^3$};
        	\node[circle, draw] (E)   at (6, 3.464)    {$q_1^1$};
        	\node[circle, draw] (F)   at (2, 3.464)    {$q_1^2$};
        	\path (A) edge [sloped] node {$1$} (B)
	              (B) edge [sloped] node {$1$} (C)
	              (C) edge [sloped] node {$0$} (D)
	              (D) edge[bend right] [sloped] node {$0$} (E)
                    (D) edge[bend left]  [sloped] node {$1$} (E)
                    (E) edge[bend right] [sloped, below] node {$0$} (F)
                    (E) edge[bend left]  [sloped, below] node {$1$} (F)
	              (F) edge [sloped] node {$\#$} (A);
        \end{scope}

        \begin{scope}[xshift=11cm]
            \node at (4, 5.5) {\LARGE $C^A_2:\quad \textcolor{red}{a_2 = 100}$};
            \node at (4, 0)        {\large $\textcolor{red}{100}\ \{0,\textcolor{red}{1}\}\{\textcolor{red}{0},1\}\textcolor{red}{\ \#}$};
        	\node[circle, draw] (A)   at (0,0)         {$\hat a_2$};
        	\node[circle, draw] (B)   at (2, -3.464)   {$a_2^1$};
        	\node[circle, draw] (C)   at (6, -3.464)   {$a_2^2$};
        	\node[circle, draw] (D)   at (8, 0)        {$a_2^3$};
        	\node[circle, draw] (E)   at (6, 3.464)    {$q_2^1$};
        	\node[circle, draw] (F)   at (2, 3.464)    {$q_2^2$};
        	\path (A) edge[red, line width=1mm] [sloped] node {$1$} (B)
	              (B) edge[red, line width=1mm] [sloped] node {$0$} (C)
	              (C) edge[red, line width=1mm] [sloped] node {$0$} (D)
	              (D) edge[bend right] [sloped] node {$0$} (E)
                    (D) edge[bend left, red, line width=1mm] [sloped]  node {$1$} (E)
                    (E) edge[bend right, red, line width=1mm]  [sloped, below] node {$0$} (F)
                    (E) edge[bend left] [sloped, below]  node {$1$} (F)
                    (F) edge[red, line width=1mm] [sloped] node {$\#$} (A);
        \end{scope}

        \begin{scope}[xshift=22cm]
            \node at (4, 5.5) {\LARGE $C^A_3:\quad a_3 = 111$};
            \node at (4, 0)        {\large $111 \  \{0,1\}\{0,1\}\ \#$};
        	\node[circle, draw] (A)   at (0,0)         {$\hat a_3$};
        	\node[circle, draw] (B)   at (2, -3.464)   {$a_3^1$};
        	\node[circle, draw] (C)   at (6, -3.464)   {$a_3^2$};
        	\node[circle, draw] (D)   at (8, 0)        {$a_3^3$};
        	\node[circle, draw] (E)   at (6, 3.464)    {$q_3^1$};
        	\node[circle, draw] (F)   at (2, 3.464)    {$q_3^2$};
        	\path (A) edge [sloped] node {$1$} (B)
	              (B) edge [sloped] node {$1$} (C)
	              (C) edge [sloped] node {$1$} (D)
	              (D) edge[bend right] [sloped] node {$0$} (E)
                    (D) edge[bend left] [sloped]  node {$1$} (E)
                    (E) edge[bend right] [sloped, below] node {$0$} (F)
                    (E) edge[bend left] [sloped, below]  node {$1$} (F)
                    (F) edge [sloped] node {$\#$} (A);
        \end{scope}

        \begin{scope}[xshift=33cm]
            \node at (4, 5.5) {\LARGE $C^A_4:\quad a_4 = 011$};
            \node at (4, 0)        {\large $011 \  \{0,1\}\{0,1\}\ \#$};
        	\node[circle, draw] (A)   at (0,0)         {$\hat a_4$};
        	\node[circle, draw] (B)   at (2, -3.464)   {$a_4^1$};
        	\node[circle, draw] (C)   at (6, -3.464)   {$a_4^2$};
        	\node[circle, draw] (D)   at (8, 0)        {$a_4^3$};
        	\node[circle, draw] (E)   at (6, 3.464)    {$q_4^1$};
        	\node[circle, draw] (F)   at (2, 3.464)    {$q_4^2$};
        	\path (A) edge [sloped] node {$0$} (B)
	              (B) edge [sloped] node {$1$} (C)
	              (C) edge [sloped] node {$1$} (D)
	              (D) edge[bend right] [sloped] node {$0$} (E)
                   (D) edge[bend left] [sloped]  node {$1$} (E)
                   (E) edge[bend right] [sloped, below] node {$0$} (F)
                   (E) edge[bend left] [sloped, below]  node {$1$} (F)
                   (F) edge [sloped] node {$\#$} (A);
        \end{scope}

        \begin{scope}[xshift=0cm, yshift=-11.5cm]
            \node at (4, 5.5) {\LARGE $C^B_1:\quad b_1 = 101$};
            \node at (4, 0)        {\large $0\{0,1\}0 \  00 \ \#$};
        	\node[circle, draw] (A)   at (0,0)         {$\hat b_1$};
        	\node[circle, draw] (B)   at (2, -3.464)   {$b_1^1$};
        	\node[circle, draw] (C)   at (6, -3.464)   {$b_1^2$};
        	\node[circle, draw] (D)   at (8, 0)        {$b_1^3$};
        	\node[circle, draw] (E)   at (6, 3.464)    {$p_1^1$};
        	\node[circle, draw] (F)   at (2, 3.464)    {$p_1^2$};
        	\path (A) edge [sloped] node {$0$} (B)
	              (B) edge[bend right] [sloped] node {$1$} (C)
                  (B) edge[bend left]  [sloped] node {$0$} (C)
	              (C) edge [sloped] node {$0$} (D)
	              (D) edge [sloped] node {$0$} (E)
                    (E) edge [sloped] node {$0$} (F)
                    (F) edge [sloped] node {$\#$} (A);
        \end{scope}

        \begin{scope}[xshift=11cm, yshift=-11.5cm]
            \node at (4, 5.5) {\LARGE $C^B_2:\quad b_2 = 101$};
            \node at (4, 0)        {\large $0\{0,1\}0 \  01 \ \#$};
        	\node[circle, draw] (A)   at (0,0)         {$\hat b_2$};
        	\node[circle, draw] (B)   at (2, -3.464)   {$b_2^1$};
        	\node[circle, draw] (C)   at (6, -3.464)   {$b_2^2$};
        	\node[circle, draw] (D)   at (8, 0)        {$b_2^3$};
        	\node[circle, draw] (E)   at (6, 3.464)    {$p_2^1$};
        	\node[circle, draw] (F)   at (2, 3.464)    {$p_2^2$};
        	\path (A) edge [sloped] node {$0$} (B)
	              (B) edge[bend right] [sloped] node {$1$} (C)
                  (B) edge[bend left]  [sloped] node {$0$} (C)
	              (C) edge [sloped] node {$0$} (D)
	              (D) edge [sloped] node {$0$} (E)
                    (E) edge [sloped] node {$1$} (F)
                    (F) edge [sloped] node {$\#$} (A);
        \end{scope}

        \begin{scope}[xshift=22cm, yshift=-11.5cm]
            \node at (4, 5.5) {\LARGE $C^B_3:\quad \textcolor{red}{b_3 = 010}$};
            \node at (4, 0)        {\large $\{0,\textcolor{red}{1}\}\textcolor{red}{0}\{\textcolor{red}{0},1\} \textcolor{red}{\ 10\ \#}$};
        	\node[circle, draw] (A)   at (0,0)         {$\hat b_3$};
        	\node[circle, draw] (B)   at (2, -3.464)   {$b_3^1$};
        	\node[circle, draw] (C)   at (6, -3.464)   {$b_3^2$};
        	\node[circle, draw] (D)   at (8, 0)        {$b_3^3$};
        	\node[circle, draw] (E)   at (6, 3.464)    {$p_3^1$};
        	\node[circle, draw] (F)   at (2, 3.464)    {$p_3^2$};
        	\path (A) edge[bend right, red, line width=1mm] [sloped] node {$1$} (B)
                  (A) edge[bend left]  [sloped] node {$0$} (B)
                  (B) edge[red, line width=1mm] [sloped] node {$0$} (C)
	              (C) edge[bend right] [sloped] node {$1$} (D)
                  (C) edge[bend left, red, line width=1mm]  [sloped] node {$0$} (D)
	              (D) edge[red, line width=1mm] [sloped] node {$1$} (E)
                    (E) edge[red, line width=1mm] [sloped] node {$0$} (F)
                    (F) edge[red, line width=1mm] [sloped] node {$\#$} (A);
        \end{scope}

        \begin{scope}[xshift=33cm, yshift=-11.5cm]
            \node at (4, 5.5) {\LARGE $C^B_4:\quad b_4 = 111$};
            \node at (4, 0)        {\large $000 \  11 \ \#$};
        	\node[circle, draw] (A)   at (0,0)         {$\hat b_4$};
        	\node[circle, draw] (B)   at (2, -3.464)   {$b_4^1$};
        	\node[circle, draw] (C)   at (6, -3.464)   {$b_4^2$};
        	\node[circle, draw] (D)   at (8, 0)        {$b_4^3$};
        	\node[circle, draw] (E)   at (6, 3.464)    {$p_4^1$};
        	\node[circle, draw] (F)   at (2, 3.464)    {$p_4^2$};
        	\path (A) edge [sloped] node {$0$} (B)
                  (B) edge [sloped] node {$0$} (C)
	              (C) edge [sloped] node {$0$} (D)
	              (D) edge [sloped] node {$1$} (E)
                    (E) edge [sloped] node {$1$} (F)
                    (F) edge [sloped] node {$\#$} (A);
        \end{scope}

      \end{tikzpicture}
    }
  }
\caption{Illustration of the cycles generated for the bit vectors in the example $A=\{110, 100, 111, 011\}$ and $B=\{101, 101, 010, 111\}$. The two cycles $C^A_2$ and $C^B_3$ generate a match, i.e., can read the same string, as the vectors $a_2$ and $b_3$ are orthogonal.}
\label{fig:reduction example}
\end{figure}
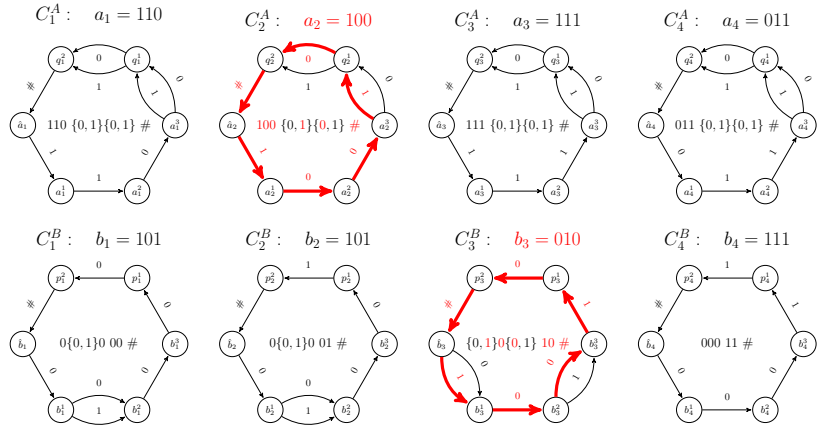

Let $A=\{110, 100, 111, 011\}$ and $B=\{101, 101, 010, 111\}$ be the given instance of the OV problem. Thus $N=4$ and $\ell=\log_2N = 2$. Notice that the only pair of orthogonal vectors in $A$ and $B$ are $a_2 = 100$ and $b_3 = 010$. 
In our DFA $\mathcal A$ we build (non-simple) cycles $C_i^A$ and $C_j^B$ for every vector $a_i$ and $b_j$ in $A$ and $B$ respectively.
As an example, for $a_2$ we build the cycle $C_2^A$ labeled with $100\{0,1\}\{0,1\}\#$, i.e.\ the bit string $100$ of $a_2$, followed by a sub-graph recognizing any bit string of length $\ell=2$, followed by $\#$. For $b_3$ we build the (non-simple) cycle $C_3^B$ labeled with $\{0,1\}0\{0,1\} 10 \#$, i.e., the bits 0 of $b_3$ are converted to a sub-graph recognizing both 0 and 1, and the bits 1 of $b_3$ are converted to an edge recognizing 0; this subgraph is followed by a path of length $\ell$ spelling 10, which is the 3rd smallest among the length-$\ell$ binary strings (i.e. the identifier for $C_3^B$ to prevent it from any match with $C_j^B$ for $j\ne 3$ while allowing matches with $C_i^A$'s), which is followed by an edge labeled with $\#$. Notice that these two cycles indeed generate a match (underlined characters indicate the match): $\underline{100}\{0,\underline 1\}\{\underline 0,1\}\underline{\#}$ and $\{0,\underline 1\}\underline 0\{\underline 0,1\}  \underline{10}\underline{\#}$. It is not hard to see that a cycle $C_i^A$ and $C_j^B$ built in this way will match if and only if the two corresponding vectors are orthogonal. Characters $\#$ are introduced to synchronize the match (otherwise, other rotations of the cycles could match). The subgraphs between the part corresponding to the input vectors and the character $\#$ are introduced to avoid that two distinct cycles $C^B_i$ and $C^B_j$ ($i\neq j$) generate a match. Note that, since we assume that $A$ contains distinct vectors, distinct cycles $C^A_i$ and $C^A_j$ ($i\neq j$) will never generate a match. 

The remaining details of the reduction ensure that (1)~the graph is a connected DFA, (2)~corresponding nodes (i.e.\ same distance from $\#$ in the cycles) $u,v$ in any pair of cycles $C^A_i$ and $C^B_j$ that correspond to orthogonal vectors $a_i$ and $b_j$, respectively, have a non-empty co-lexicographic intersection $\mathcal I(u) \cap \mathcal I(v)$, (3)~the DFA is indeed minimum for its recognized language, and (4)~the alphabet can be reduced to $\{0,1\}$ by an opportune mapping.

\begin{figure}[t]
  \centering{
    \resizebox{0.9\columnwidth}{!}{
      \begin{tikzpicture}[
          scale=.6,
          ->,
          >=stealth',
          shorten >=1pt,
          auto,
          semithick,
          every node/.style={minimum size=7mm}
        ]
            \node at (-2.0, 0) {\Huge $\mathcal A$:};

        	\node[circle, draw] (s)   at (0.5,0)  {$s$};
        	\node[circle, draw] (b)   at (2.5,-4)  {};
        	\node[circle, draw] (a)   at (2.5,4)  {};
        	\path (s) edge [sloped] node {$1$} (b);
        	\path (s) edge [sloped] node {$0$} (a);
            \node[circle, draw] (a1)   at (4.5,6)  {};
        	\node[circle, draw] (a2)   at (4.5,2)  {};
        	\path (a) edge [sloped] node {$0$} (a1);
        	\path (a) edge [sloped] node {$1$} (a2);
            \node[circle, draw] (b1)   at (4.5,-2)  {};
        	\node[circle, draw] (b2)   at (4.5,-6)  {};
        	\path (b) edge [sloped] node {$0$} (b1);
        	\path (b) edge [sloped] node {$1$} (b2);

            \node (d) at (6, 3) {$\ldots$};
            \node (d) at (6, -5) {$\ldots$};

            \path [draw = black, rounded corners, inner sep=100pt, dotted]
            (-0.75, 1) -- (2.5, 7.25) -- (10.15, 8.5) -- (10.15, -8.5) -- (2.5, -7.25) -- (-0.75, -1) -- cycle ;
            \node  (empty)    at (1.0, -6.5)  {\Large $V^{out}$};

            \node[circle, draw] (xN)   at (9, 2)  {$x_{N}$};
        	\node at (9, 4.7) {$\vdots$};
        	\node[circle, draw] (x1)   at (9, 7)  {$x_{1}$};
            \node[circle, draw] (y1)   at (9, -1.5)  {$y_{1}$};
        	\node[circle, draw] (y2)   at (9, -3.5)  {$y_{2}$};
        	\node at (9, -5) {$\vdots$};
        	\node[circle, draw] (yN)   at (9, -7)  {$y_{N}$};

            \node[circle, draw] (y1p)   at (12, -1.5)  {$y_{1}'$};
        	\node[circle, draw] (y2p)   at (12, -3.5)  {$y_{2}'$};
        	\node[circle, draw] (yNp)   at (12, -7)  {$y_{N}'$};
            \node[circle, draw] (xNp)   at (12, 1)  {$x_{N}''$};
            \node[circle, draw] (xNpp)   at (12, 3)  {$x_{N}'$};
        	\node[circle, draw] (x1p)   at (12, 6)  {$x_{1}''$};
            \node[circle, draw] (x1pp)   at (12, 8)  {$x_{1}'$};

            \node[circle, draw] (a1p)   at (15, 7)  {$\hat a_{1}'$};
            \node[circle, draw] (aNp)   at (15, 2)  {$\hat a_{N}'$};
            \node[circle, draw] (b1p)   at (15, -1.5)  {$\hat b_{1}'$};
        	\node[circle, draw] (b2p)   at (15, -3.5)  {$\hat b_{2}'$};
        	\node[circle, draw] (bNp)   at (15, -7)  {$\hat b_{N}'$};

            \begin{scope}[xshift=3cm]
                \node[circle, draw] (a1)   at (15, 7)  {$\hat a_{1}$};
            	\node[circle, draw] (aN)   at (15, 2)  {$\hat a_{N}$};
                \node[circle, draw] (b1)   at (15, -1.5)  {$\hat b_{1}$};
            	\node[circle, draw] (b2)   at (15, -3.5)  {$\hat b_{2}$};
            	\node[circle, draw] (bN)   at (15, -7)  {$\hat b_{N}$};
    
                \path   (y1) edge [sloped] node {$0$} (y1p)
                        (y1p) edge [sloped] node {$1$} (b1p);
                \path   (y2) edge [sloped] node {$0$} (y2p)
                        (y2p) edge [sloped] node {$1$} (b2p);
                \path   (yN) edge [sloped] node {$0$} (yNp)
                        (yNp) edge [sloped] node {$1$} (bNp);
                \path   (x1) edge [sloped] node {$0$} (x1p)
                        (x1) edge [sloped] node {$1$} (x1pp)
                        (x1p) edge [sloped] node {$0$} (a1p)
                        (x1pp) edge [sloped] node {$1$} (a1p);
                \path   (xN) edge [sloped] node {$0$} (xNp)
                        (xN) edge [sloped] node {$1$} (xNpp)
                        (xNp) edge [sloped] node {$0$} (aNp)
                        (xNpp) edge [sloped] node {$1$} (aNp);
                \path   (a1p) edge [sloped] node {$0$} (a1)
                        (aNp) edge [sloped] node {$0$} (aN)
                        (b1p) edge [sloped] node {$0$} (b1)
                        (b2p) edge [sloped] node {$0$} (b2)
                        (bNp) edge [sloped] node {$0$} (bN);
    
                \path [draw = black, rounded corners, inner sep=100pt, dotted]
                (7.35, 9.0) -- (13.65, 9.0) -- (13.65, -9.0) -- (7.35, -9.0) -- cycle ;
                \node  (empty)    at (11, -8.5)  {\Large $I$};
    
            	\draw[hanblue, fill=hanblue, opacity=0.4] (16, -1.5) ellipse (2cm and 0.9cm); 
            	\node () at (17, -1.5) {$C^B_1$};
            	\draw[hanblue, fill=hanblue, opacity=0.4] (16, -3.5) ellipse (2cm and 0.9cm); 
            	\node () at (17, -3.5) {$C^B_2$};
                \node at (16.5, -5) {$\vdots$};
            	\draw[hanblue, fill=hanblue, opacity=0.4] (16, -7) ellipse (2cm and 0.9cm); 
            	\node () at (17, -7) {$C^B_N$};
    
            	\draw[orange, fill=orange, opacity=0.4] (16, 7) ellipse (2cm and 0.9cm); 
            	\node () at (17, 7) {$C^A_1$};
                \node at (16.5, 4.7) {$\vdots$};
            	\draw[orange, fill=orange, opacity=0.4] (16, 2) ellipse (2cm and 0.9cm); 
            	\node () at (17, 2) {$C^A_N$};
    
                \path [draw = black, rounded corners, inner sep=100pt, dotted]
                (13.85, 8.0) -- (18.1, 8.0) -- (18.1, -8.0) -- (13.85, -8.0) -- cycle ;
                \node  (empty)    at (14.8, 8.5)  {\Large $C$};
    
            	\node[circle, draw] (t1)   at (19.5, 7)  {$t_{1}$};
            	\node at (19.5, 4.7) {$\vdots$};
                \node[circle, draw] (tN)   at (19.5, 2)  {$t_{N}$};
                \node[circle, draw] (z1)   at (19.5, -1.5)  {$z_{1}$};
            	\node[circle, draw] (z2)   at (19.5, -3.5)  {$z_{2}$};
            	\node at (19.5, -5) {$\vdots$};
            	\node[circle, draw] (zN)   at (19.5, -7)  {$z_{N}$};
    
                \node[] 		(CA1)   at (16.5, 7.2) {};
                \path (CA1) edge [sloped, bend left=60] node {$0$} (t1);
                \node[] 		(CAN)   at (16.5, 2.2) {};
                \path (CAN) edge [sloped, bend left=60] node {$0$} (tN);
                \node[] 		(CB1)   at (16.5, -1.3) {};
                \path (CB1) edge [sloped, bend left=60] node {$0$} (z1);
                \node[] 		(CB2)   at (16.5, -3.3) {};
                \path (CB2) edge [sloped, bend left=60] node {$0$} (z2);
                \node[] 		(CBN)   at (16.5, -6.8) {};
                \path (CBN) edge [sloped, bend left=60] node {$0$} (zN);
    
                \node[circle, draw, accepting] (t)   at (28,0)  {$t$};
            	\node[circle, draw] (tb)   at (26,-4)  {};
            	\node[circle, draw] (ta)   at (26,4)  {};
            	\path (tb) edge [sloped] node {$1$} (t);
            	\path (ta) edge [sloped] node {$0$} (t);
    
                \node (d) at (22.5, 3) {$\ldots$};
                \node (d) at (22.5, -5) {$\ldots$};
             
                \node[circle, draw] (ta1)   at (24,6)  {};
            	\node[circle, draw] (ta2)   at (24,2)  {};
            	\path (ta1) edge [sloped] node {$0$} (ta);
            	\path (ta2) edge [sloped] node {$1$} (ta);
    
                \node[circle, draw] (tb1)   at (24,-2)  {};
            	\node[circle, draw] (tb2)   at (24,-6)  {};
            	\path (tb1) edge [sloped] node {$0$} (tb);
            	\path (tb2) edge [sloped] node {$1$} (tb);
    
                \path [draw = black, rounded corners, inner sep=100pt, dotted]
                (29.25, 1) -- (26, 7.25) -- (18.3, 8.5) -- (18.3, -8.5) -- (26, -7.25) -- (29.25, -1) -- cycle ;
                \node  (empty)    at (27.5, -6.5)  {\Large $V^{in}$};
            \end{scope}
      \end{tikzpicture}
    }
  }
\caption{Illustration of our construction of $\mathcal A$ for an arbitrary instance $A=\{a_1,\ldots, a_N\}$ and $B=\{b_1,\ldots, b_N\}$ of the OV problem. The cycles $C_1^A, \dots, C_N^A, C_1^B, \dots, C_N^B$ are expanded in Figure \ref{fig:reduction example} on a particular OV instance.}
\label{fig:reduction}
\end{figure}
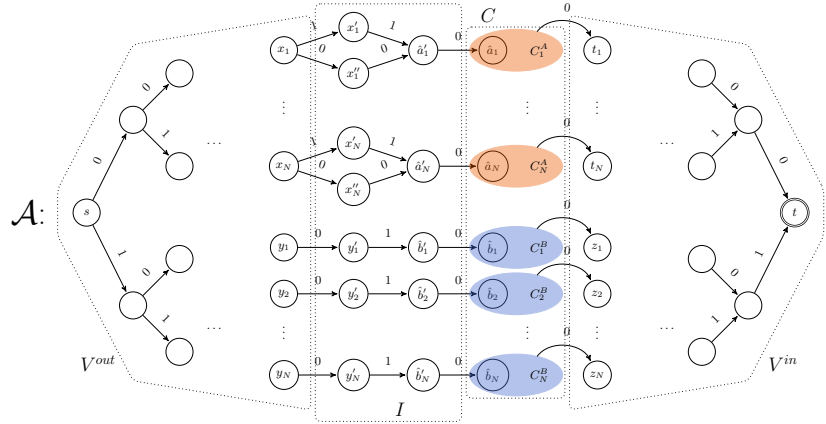

An illustration of the overall construction for an arbitrary instance of the OV problem can be found in Figure~\ref{fig:reduction}. The complete proof is deferred to the appendix. We proceed with a sketch on how we achieve the above properties. Property~(1) is achieved by connecting the above described cycles (we call the set of all cycles' nodes $C$), to the source node $s$ through a binary out-tree of logarithmic depth (we call these nodes $V^{out}$). Property~(2) is achieved by connecting $V^{out}$ to $C$ through the nodes in $I$ that ensure that nodes $u \in C^A_i,v \in C^B_j$ in the same relative positions (i.e.\ same distance from $\#$) in their cycles that correspond to orthogonal vectors $a_i$ and $b_j$ are reached by strings of alternating co-lexicographic order, i.e., for a suitable string $\tau$, $I_u$ contains two strings suffixed by $00\tau$ and $11\tau$, while $I_v$ contains a string suffixed by $01\tau$. 
This implies $\mathcal I(u) \cap \mathcal I(v) \neq \emptyset$.
Property~(3) is instead achieved by connecting one node from each cycle $C^A_i$ and $C^B_j$ (the one with out-edge $\#$) with an edge labeled 0 to a complete binary in-tree (we call these nodes $V^{in}$) with root being the only accepting state $t$. 
Observe that the
reversed automaton (i.e. the automaton obtained by reversing the direction of all the edges) is \emph{deterministic}; this ensures that any two nodes in the graph can reach $t$ through a distinct binary string, witnessing that $\mathcal A$ is indeed minimal (by the Myhill-Nerode characterization of the minimum DFA \cite{nerode1958linear}). 
Property~(4) can be easily satisfied by transforming the instance as follows. Edges labeled $0$ ($1$) are replaced with a directed path labeled with $00$ ($11$), while edges labeled $\#$ are replaced with a directed path labeled $101$. The pattern $101$ then appears only on the paths that originally corresponded to $\#$ and thus two transformed cycles match if and only if they used to match before the transformation. We note that also forward- and reverse- determinism (thus minimality) are maintained under this transformation (the nodes $\hat a'_i$ and $\hat b'_j$ in $I$ are introduced for maintaining reverse-determinism).

%
%
%
 \bibliographystyle{splncs04}
 \bibliography{ref}

\newpage

\appendix

\section{Computing $\mathcal{I}(u)$ using \cite[Sec. 4]{becker2023sorting}}\label{app:assumptions partition refinement}

We compute $\mathcal{I}(u)$ for each $u\in Q_{min}$ of the minimum DFA $\mathcal{A}_{min}$ using the partition refinement algorithm described in \cite[Sec. 4]{becker2023sorting}.

In \cite{becker2023sorting}, it is assumed that some properties hold of the input automaton. In particular, (i) every state is reachable from the source state; (ii) the source state does not have any incoming transition; and (iii) input consistency, i.e. all transitions entering in a given state bear the same label.
It is worth noting that \cite{becker2023sorting} does not assume the input automaton to be deterministic; their solution works for arbitrary NFAs.

The algorithm described in \cite[Sec. 4]{becker2023sorting} works in three steps: (1) the infima strings are computed using \cite[Alg. 2]{becker2023sorting}, (2) the suprema strings are computed using \cite[Alg. 2]{becker2023sorting}, and finally (3) the two sets of strings are merged and sorted using the suffix doubling algorithm of \cite{kop:cpm23:working} (running in linearithmic time on the pseudo-forests output by \cite[Alg. 2]{becker2023sorting}).
Since, in our paper, we run \cite[Sec. 4]{becker2023sorting} on the minimum DFA $\mathcal A_{min}$. This automaton satisfies (1), 
but it does not necessarily satisfy (ii) and (iii). We now show that any DFA $\mathcal A$ satisfying (i) and (ii) can be transformed into two NFAs $\mathcal A_{inf}$ and $\mathcal A_{sup}$ satisfying conditions (i-iii) and preserving the infima/suprema strings of $\mathcal A$, respectively. 
The transformation takes linear time in the input automaton.

We describe how to transform $\mathcal{A}$ into $\mathcal{A}_{inf}=(Q_{inf},\Sigma_{inf},\delta_{inf},s_{inf},F_{inf})$ that will be fed into \cite[Alg. 2]{becker2023sorting} to compute the infima strings of $\mathcal{A}$. 
The procedure for $\mathcal{A}_{sup}$ is symmetric.
In summary, we add two new states $s'$, $s''$, two new symbols $\$$ and $\#$ to the alphabet, and we define the transition function $\delta_{inf}$ appropriately in order to make $\mathcal{A}_{inf}$ satisfy (i-iii) while also preserving the infima strings of $\mathcal A$.

\begin{enumerate}
    \item We define $Q_{inf}=Q\cup\{s_{inf},s'\},\Sigma_{inf}=\Sigma\cup\{\$,\#\},s_{inf},F_{inf}=F$.
    \item 
    To make $\mathcal{A}_{inf}$ input-consistent, for each state we keep only the incoming transitions with the minimum label. More formally, for $u,v\in Q\cup\{s'\}$ and $a\in\Sigma$ such that $\delta(v,a)=u$, $\delta_{inf}(v,a)=u$ iff $a=\lambda(u)$, where $\lambda(u) = \min\{c\in\Sigma:\exists w\in Q\cup\{s'\}~\delta(w,c)=u\}$ is the minimum label of $u$'s incoming transitions; for convenience, we define $\delta(s',\$)=s$ where $\$$ is a special symbol with $\$\prec a$ for all $a\in\Sigma$. 
    
    \item  
    To make $\mathcal{A}_{inf}$ connected, we connect the state $s'$ to every $u\in Q_{inf}\setminus\{s_{inf},s'\}$ with a transition $\delta_{inf}(s',\lambda(u))=u$. 
    
    \item 
    Finally, we create a transition from the new source $s_{inf}$ to $s'$ with label $\#$, and the order on $\Sigma_{inf}$ is defined as: $\$\prec a\prec\#$ for all $a\in\Sigma$. This prevents the transitions defined in the previous step from affecting the infima strings of the automaton. 
\end{enumerate}

We show that the infima strings are preserved after this transformation. More precisely, for any given state $u\in Q$, let $\alpha$ and $\beta$ be the infimum string of $u$ on $\mathcal{A}_{inf}$ and $\mathcal{A}$, respectively. Then we claim that (i) $\alpha=\beta$ if $\beta$ has infinite length, and (ii) $\alpha=\#\$\beta$ if $\beta$ is a finite string. Note that the order between a finite string and a string of infinite length is also preserved because $\$$ is smaller than any symbol $a\in\Sigma$.
Let us assume for a contradiction that there exists $u\in Q$ such that this is not satisfied. We consider two cases: (i) the condition is falsified because we removed transitions in Step 2 or (ii) the infimum string is preserved in Step 2 but adding the new state $s'$ and $s_{inf}$ with new transitions in Steps 3 and 4 affects the infimum string. (i) If Step 2 affects the infimum string, $\beta$ must end with some $b(\in\Sigma)$ with $b\ne a=\min\{c\in\Sigma:\exists w\in Q~\delta(w,c)=u\}$, which implies there exists $\alpha'\in\Sigma^*\cup\Sigma^\omega$ entering $u$ such that $\alpha'$ ends with $a$ thereby being $\alpha'\prec \beta$, contradicting the definition of the infimum string. (ii) If added states and transitions in Steps 3 and 4 affect the infimum string, then we can observe that it must be $\alpha=\#\cdot\lambda(u)$, and it must also hold $\alpha\preceq\beta$ by definition of infimum string. However, this can happen only for the source state $s$ of $\mathcal{A}$. To see this, we observe that every state $u(\ne s)$ in $\mathcal{A}$ has at least one incoming transitions because it is reachable from the source, thus $\beta=\beta'\cdot\lambda(u)$ for some non-empty string $\beta'\in\Sigma^+\cup\Sigma^\omega$. Because $a\prec \#$ for every $a\in\Sigma$, we have $\beta'\prec \#$, which implies $\beta'\cdot\lambda(u)\prec\#\cdot\lambda(u)$; thus the infimum string is not affected by Step 3 except on state $s$. Note that only the state $s$ can have the empty string as its infimum string on $\mathcal{A}$, i.e. $\beta=\epsilon$, which is a finite string. Now we show $\alpha=\#\$\beta$ holds in the case that $\beta$ is finite string, which also includes the case $u=s$. For any $u$ that has a finite infimum string $\beta$, we can observe that $\delta_{inf}(s,\beta)=u$ by Lemma~\ref{lem: observation}.\ref{obs: finite inf} and there is no string $\gamma\in\Sigma^*$ such that $\gamma\prec\beta$ and $\delta_{inf}(s,\gamma)=u$ because the infimum string is preserved in Step 2. Since $\#\$$ is the unique string labeling the path from the new source $s''$ and $s$, $\#\$\beta$ is the smallest string labeling the path from $s''$ to $u$.

Transforming $\mathcal{A}$ info $\mathcal{A}_{sup}$ is symmetric. In Step 2, we keep only the transitions with maximum labels. In Step 4, we define $\#$ as the smallest symbol; i.e. $ \#\prec \$\prec a$ for every $a\in\Sigma$.

As far as the running time of the transformation is concerned, the number of states is increased by just 2 and the number of transitions is increased by at most $n+1=O(m)$, so the process takes linear time. Computing the co-lex order of the infima and suprema strings takes therefore $O(O(m)\lg (n+2))=O(m\lg n)$ time as claimed.

\section{Parameterized algorithm}\label{appendix:parameterized}

\subsection*{Proof of Lemma \ref{lem:size of pruned square A}}

We first prove the following lemma.

\begin{lemma}\label{lem: interval graph num edge}
    Let $S=\{(x_i,y_i)\ :\ i\in [n]\}$ be a (multi-)set of $n$ open intervals over domain $x_i,y_i \in [2n]$ with $x_i<y_i$. For an integer $p\ge 1$, suppose there exist no $p+1$ pairwise intersecting intervals in $S$. Let $J=\{ (i,j)\in[n]^2:i\ne j\land (x_i,y_i)\cap(x_j,y_j) \neq\emptyset\}$ be the set of (indices of) intersecting intervals of $S$. Then $|J|\le 2n(p-1)$. 
\end{lemma}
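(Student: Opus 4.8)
The plan is to bound $|J|$ by a sweep-line / counting argument over the $2n$ endpoints in $[2n]$. For each interval $(x_i,y_i)$ I would count only the ``leftward'' intersections, i.e. pairs $(i,j)\in J$ with $x_j \le x_i$ (breaking ties by index so that each unordered intersecting pair is counted exactly once from the interval with the larger left endpoint). If I can show that every interval has at most $p-1$ such leftward partners, then summing over the $n$ intervals gives at most $n(p-1)$ unordered pairs, hence $|J|\le 2n(p-1)$ once we pass back to ordered pairs $(i,j)$ and $(j,i)$.

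So fix an interval $(x_i,y_i)$ and consider all $j$ with $x_j\le x_i$ and $(x_j,y_j)\cap(x_i,y_i)\neq\emptyset$. For such a $j$, intersection together with $x_j\le x_i$ forces $x_i < y_j$ (the open interval $(x_j,y_j)$ must extend past $x_i$), so in particular $x_i$ lies strictly inside $(x_j,y_j)$, and of course $x_i$ lies in $(x_i,y_i)$ only in the limiting sense --- here I should be slightly careful, since the interval $(x_i,y_i)$ is open and $x_i\notin(x_i,y_i)$. The cleaner formulation is: pick a point $z$ with $x_i < z < \min\{y_i,\ \min_j y_j\}$ over the finitely many relevant $j$; such a $z$ exists because all the strict inequalities $x_i<y_i$ and $x_i<y_j$ hold and the endpoints are integers (so we may take $z = x_i + \tfrac12$, or pass to the refined integer grid $[4n]$ after scaling endpoints by $2$). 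Then $z$ belongs to $(x_i,y_i)$ and to every $(x_j,y_j)$ with $x_j\le x_i$ intersecting $(x_i,y_i)$. Hence $(x_i,y_i)$ together with all its leftward partners form a family of pairwise intersecting intervals (they all contain the common point $z$), and by hypothesis this family has size at most $p$; therefore the number of leftward partners of $i$ is at most $p-1$.

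Summing this bound over all $i\in[n]$ counts each unordered intersecting pair exactly once (from its member with the larger left endpoint, ties broken by index), giving at most $n(p-1)$ unordered pairs and thus $|J|\le 2n(p-1)$ ordered pairs, as claimed. The main technical care needed is the handling of open endpoints and integer coordinates in the step where I extract the common witness point $z$ --- this is where one must avoid an off-by-one error, and the safe route is to double all coordinates at the outset so that all endpoints are even integers in $[4n]$ and any needed witness point can be taken to be an odd integer strictly between two consecutive even ones. Everything else is bookkeeping.
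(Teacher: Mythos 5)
Your proof is correct, and it takes a genuinely different (and more elementary) route than the paper's. The paper first invokes a transformation lemma from \cite{kop:cpm23:working} to replace the open intervals by closed intervals with the same intersection pattern, and then cites a known Tur\'an-type bound for closed-interval graphs from \cite{abbott79:turan:intervalgraph} stating that a $K_{p+1}$-free interval graph on $n$ vertices has at most $n(p-1)$ edges. You instead prove the needed bound from scratch by a charging argument: each unordered intersecting pair $\{i,j\}$ is charged to the member with the larger left endpoint (ties broken by index), and for any fixed $i$ all of its ``leftward'' partners $j$ satisfy $x_j\le x_i<y_j$, so together with $(x_i,y_i)$ they all contain the half-integer witness $z=x_i+\tfrac12$ and hence form a pairwise-intersecting family, which by hypothesis has size at most $p$; thus at most $p-1$ pairs are charged to each $i$. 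This is exactly the standard proof that interval graphs are perfect/have the clique-Helly property, specialized to give the edge bound. Both proofs yield $|J|\le 2n(p-1)$; yours is self-contained and avoids both external citations and the open-to-closed endpoint transformation, at the cost of having to handle the open-endpoint/tie-breaking bookkeeping inline, which you do correctly.
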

\begin{proof}
    Consider the open-interval graph $G$ constructed from $S$ (that is, the undirected graph with set of nodes $S$ and edges connecting intersecting intervals).
    Since no $p+1$ intervals in $S$ are pairwise intersecting, there is no clique of size $p+1$ in such an interval graph. 
    
    We apply the transformation of \cite[Lem. 12]{kop:cpm23:working} to the set of open intervals $S = \{(x_i,y_i)\ :\ i\in[n]\}$ 
    mapping each $(x_i,y_i)$ into a closed interval $[x_i,y_i]$ over domain $x'_i,y'_i \in [4n+1]$ with the property that $(x_i,y_i) \cap (x_j,y_j) \neq \emptyset$ if and only if $[x'_i,y'_i]\cap [x'_j,y'_j] \neq \emptyset$. 
    Let $S' = \{[x'_i,y'_i]\ :\ i\in[n]\}$ be the transformed interval set.
    As a result, no $p+1$ intervals in $S'$ are pairwise intersecting.

    The number of edges of a closed-interval graph (i.e. constructed from closed intervals) that does not contain a clique of size $p'=p+1$ is known to be not greater than $\frac{1}{2}(p'-1)(p'-2)+(n-p'+1)(p'-2) \le n(p-1)$ \cite[Sec.~2]{abbott79:turan:intervalgraph}.
    Then, by \cite[Lem. 12]{kop:cpm23:working}, also the open-interval graph $G$ corresponding to intervals $S$ contains at most $n(p-1)$ edges. 
    Note that the size of $J$ is equal to twice the number of edges of $G$. Our claim follows. \qed
\end{proof}

We proceed with the proof of Lemma \ref{lem:size of pruned square A}.
Recall that $p=width(\mathcal{A}_{min})$ implies that there exist no $p+1$ pairwise incomparable states with respect to the order $<_{\mathcal{A}_{min}}$ defined in Definition~\ref{def: <_A}.
From Lemma~\ref{lem_W order2}, this is equivalent to the fact that there are no $p+1$ pairwise intersecting open intervals $\mathcal{I}(u) = (\inf I_u, \sup I_u)$ in $\mathcal{A}_{min}$. 
From Lemma~\ref{lem: interval graph num edge}, 
the set $\{\mathcal{I}(u)=(\inf I_u, \sup I_u)\ :\ u\in Q_{min}\}$ contains at most $2n(p-1)$ pairs of intersecting intervals, implying that the number of states of $\mathcal {\hat A}_{min}^2$ is $|\{(u,v)\in Q_{min}^2: u\ne v\land\mathcal{I}(u)\cap\mathcal{I}(v)\ne\emptyset\}| \leq 2n(p-1)$.

Now we shall show the number of transitions of $\hat{\mathcal{A}}^2_{min}$ is at most $2m(p-1)$.
For each $a\in\Sigma$, consider the set $Q_a=\{u\in Q_{min}: \exists v\in Q_{min}:\delta'(u,a)=v\}$ of states of $Q_{min}$ that have an outgoing transition labeled with $a$.
Since $Q_a$ is a subset of $Q_{min}$, there are no $p+1$ pairwise intersecting intervals in the set $\{\mathcal{I}(u)=(\inf I_u, \sup I_u)\ :\ u\in Q_a\}$. By Lemma~\ref{lem: interval graph num edge}, this implies that $Q_a$ contains at most $2|Q_a|(p-1)$ pairs of intersecting intervals.
Observe that the pruned square automaton $\hat{\mathcal{A}}^2_{min}$ has a transition from $(u,v)\in Q^2_{min}$ to $(u',v')\in Q^2_{min}$ labeled by $a$ only if $(u,v)\in Q_a^2$ and $\mathcal{I}(u)\cap\mathcal{I}(v)\ne\emptyset$.
Since $\hat{\mathcal{A}}^2_{min}$ is a DFA, the state $(u,v)\in Q^2_{min}$ and letter $a$ uniquely determine this transition, so the number of transitions with label $a$ in $\hat{\mathcal{A}}^2_{min}$ is at most $2|Q_a|(p-1)$.
The total number of transitions in $\hat{\mathcal{A}}^2_{min}$ is therefore at most $\sum_{a\in\Sigma} 2|Q_a|(p-1)\le 2m(p-1)$ because $\sum_{a\in\Sigma}|Q_a|= m$.
\qed

\paragraph{\textbf{The algorithm}}

We now present an algorithm to build $\mathcal {\hat A}_{min}^2$ in a time proportional to its size $O(mp)$.

We first 
show how to build the $O(np)$ states of $\mathcal {\hat A}_{min}^2$ in $O(np)$ time.
We sort intervals\footnote{Recall that we represent each $\inf I_u$ and $\sup I_u$ with an integer in the range $[1,2n]$: the co-lexicographic rank of those strings in the set $\{\inf I_u:u\in Q\}\cup\{\sup I_u:u\in Q\}$. As a result, we can radix-sort these intervals in $O(n)$ time.}
$\mathcal I(u) = (\inf I_u,\sup I_u)$ for $u\in Q$ in increasing order by their first components $\inf I_u$. Let $\mathcal I(u_1), \dots, \mathcal I(u_n)$ be the resulting order. Notice that, given a state $u_i$, all states $u_j$ such that $\mathcal I(u_i) \cap \mathcal I(u_j) \neq \emptyset$ are adjacent in this order.
Then, the $O(np)$ states $(u,v)$ of $\mathcal {\hat A}_{min}^2$ can be built in $O(np)$ time with the following algorithm: (1) initialize $i=1$ and $j=2$. (2) If $j\leq n$ and $\mathcal I(u_i) \cap \mathcal I(u_j) \neq \emptyset$, then create states\footnote{Note that $\mathcal {\hat A}_{min}^2$ is symmetric: if $(u,v)$ is a state, then also $(v,u)$ is a state (a similar property holds for its transition function). Although we could further prune it by removing this symmetry, this would slightly complicate our notation.} $(u_i,u_j)$ and $(u_j,u_i)$ and increment $j$. 
Otherwise, increment $i$ and set $j=i+1$. (3) If $i<n$, go to step (2). 

Each step of the algorithm either creates a new state of $\mathcal {\hat A}_{min}^2$ or increments $i$ ($i$ can be incremented at most $n$ times). It follows that its running time is proportional to the number of states, $O(np)$.

To conclude, we show how to build the $O(mp)$ transitions of $\mathcal {\hat A}_{min}^2$ in $O(mp)$ time; let us denote with $\delta''$ the transition function of  $\mathcal {\hat A}_{min}^2$. 
Let $\mathcal I(u_1), \dots, \mathcal I(u_n)$ be the order defined above. For each $a\in \Sigma$, we build an array $L_a$ defined as follows: $L_a$ contains all states $u_i$ such that $u_i$ has an outgoing edge labeled with $a$, sorted by increasing index $i$. 
Let $m_a$ denote the number of nodes in array $L_a$.
Clearly, the total number of nodes in all arrays is $\sum_{a\in\Sigma} m_a  = m$ since we add a node in one of those arrays for each edge of $\mathcal A_{min}$; moreover, those arrays can be built in $O(m)$ time by simply visiting all nodes in the order $u_1, \dots, u_n$: when visiting $u_i$, for all $a\in \Sigma$ labeling an outgoing edge of $u_i$, append $u_i$ at the end of $L_a$.

To build the transition function $\delta''$ of $\mathcal {\hat A}_{min}^2$ in $O(mp)$ time, for each $a\in \Sigma$ we run the following algorithm. (1) initialize $i=1$ and $j=2$. (2) We distinguish three cases (A-C):

(A) If $j\leq m_a$ and $(L_a[i],L_a[j])$ and $(\delta_{min}(L_a[i],a),\delta_{min}(L_a[j],a))$ are both states of $\mathcal {\hat A}_{min}^2$ (as computed above), create the two symmetric transitions 
$$
\begin{array}{l}
\delta''((L_a[i],L_a[j]),a) = (\delta_{min}(L_a[i],a),\delta_{min}(L_a[j],a))\\
\delta''((L_a[j],L_a[i]),a) = (\delta_{min}(L_a[j],a),\delta_{min}(L_a[i],a))
\end{array}
$$  
and increment $j$.

(B) If $j\leq m_a$ and $(L_a[i],L_a[j])$ is a state 
while $(\delta_{min}(L_a[i],a),\delta_{min}(L_a[j],a))$ is not, just increment $j$. 

(C) If none of $(L_a[i],L_a[j])$ and $(\delta_{min}(L_a[i],a),\delta_{min}(L_a[j],a))$ are states in $\mathcal{A}^2_{min}$ or if $j > m_a$, increment $i$ and set $j=i+1$. 

Each step of the algorithm either creates a new transition of $\mathcal {\hat A}_{min}^2$ (case A), visits a state $(L_a[i],L_a[j])$ of $\mathcal {\hat A}_{min}^2$ (case B: note that in this case $j$ is incremented, so no state is visited twice), or increments $i$ (case C: note that $i$ can be incremented at most $m_a$ times). It follows that its running time is proportional to the number of states plus the number of transitions, $O(mp)$.\qed

\section{Reduction Details}\label{appendix: reduction}

We start with a detailed description of the DFA $\mathcal A= (Q, \Sigma, \delta, s, F)$ constructed for an OV instance  $A=\{a_1,\ldots, a_N\}$ and $B=\{b_1,\ldots, b_N\}$. 
The states $Q$ of $\mathcal A$ consist of four disjoint sets $V^{out}$, $I$, $C$, and $V^{in}$. We will now define these four sets and the transitions connecting them (thus defining $\delta$). We refer the reader back to Figure~\ref{fig:reduction} for an illustration.
Throughout the description, for an integer $i\in[N]$, we denote by $\rho(i)$ the bit-string of length $\ell$ that represents the integer $i - 1$ in binary.

\begin{description}
    \item[$V^{out}$:] The set $V^{out}$ is connected as a complete binary out-tree of depth $\ell + 1$ with root being the source node $s$. This tree has $2N$ leaves that we call $x_1,\ldots, x_N$ and $y_1,\ldots, y_N$ and is such that a leaf node $x_i$ is reached from $s$ by a unique path labeled $0\rho(i)$, while a leaf node $y_j$ is reached from $s$ by a unique path labeled $1\rho(j)$.
    \item[$I$:] The set $I$ consists of $5N$ nodes, $x_i'$, $x_i''$, and $\hat a_i'$ for $i\in [N]$ and $y_j'$ and $\hat b_j'$ for $j\in [N]$. A node $\hat a_i'$ is reachable from $x_i\in V^{out}$ by two paths through $x_i'$ and $x_i''$ labeled $11$ and $00$, respectively. A node $\hat b_j'$ is reachable from $y_j\in V^{out}$ by a path labeled $01$ through $y_j'$.
     \item[$C$:] The set $C$ is composed of $2N$ disjoint sets (cycles), $C_i^A$ for $i\in[N]$ and $C_j^B$ for $j\in[N]$. Each set $C_i^A$ contains a node $\hat a_i$ and each set $C_j^B$ contains a node $\hat b_j$. Every node $\hat a_i$ ($\hat b_j$) is reachable from $\hat a'_i\in I$ ($\hat b'_j\in I$) by an edge labeled 0. Each set $C_i^A$ and $C_j^B$ consists of $d + \ell + 1$ nodes. These nodes are connected in a cycle-like way as illustrated in Figure~\ref{fig:reduction det}: (1)~From node $\hat a_i$, we can exactly read the bit string $a_i$ that is equal to the $d$ bits in the input vector $a_i$ (note that we slightly abuse notation here by denoting with $a_i$ both the string and the vector) by passing through the subsequent states $(\hat a_i = a_i^0), a_i^1, \ldots, a_i^d$. From $a_i^d$, we can then read any sequence of $d$ bits in $\{0,1\}^\ell$ by passing through the states $q_i^1,\ldots, q_i^\ell$ followed by the character $\#$ and arriving back at node $\hat a_i$. The latter sequence of nodes is the gadget $Q_i$ from Figure~\ref{fig:reduction det}. (2)~From node $\hat b_j$, we can also read strings of length $d$ that however depend on the input vector $b_j$, this time by passing through subsequent states $(\hat b_j=b_j^0), b_j^1, \ldots, b_j^d$. If $b_j[r]=1$, node $b_{j-1}^r$ is connected to node $b_j^r$ with one edge labeled with 0. If $b_j[r]=0$, node $b_{j-1}^r$ is connected to node $b_j^r$ with two edges, labeled with 0 and 1 respectively. From $b_j^d$, we can then read the string $\rho(j)$ by passing through the states $p_j^1,\ldots, p_j^\ell$ followed by the character $\#$ and arriving back at node $\hat b_j$. The latter sequence of nodes is the gadget $P_j$ from Figure~\ref{fig:reduction det}.
    \item[$V^{in}$:] The set $V^{in}$ is connected as a complete binary in-tree of depth $\ell + 1$ with root being a state $t$. This tree again has $2N$ leaves that we call $t_1,\ldots, t_N$ and $z_1,\ldots, z_N$ and is such that a leaf node $t_i$ can reach the root $t$ by a unique path labeled $\rho(i)0$, while a leaf node $z_j$ can reach $t$ by a unique path labeled $\rho(j)1$. Every node $t_i$ is reachable from $q_i^\ell\in C^A_i$ by an edge labeled $0$ and every node $z_j$ is reachable from $p_j^\ell\in C^B_j$ by an edge labeled $0$.
\end{description} 
Finally we define the set of final states as $F:=\{t\}$. We proceed with the following observation:
The constructed DFA $\mathcal A$ is in fact minimal, i.e., $\mathcal A = \mathcal A_{min}$. This is an easy observation from inspecting the reversed automaton $\mathcal A^r$ that has the same set of states, $t$ as source node, and all transitions reversed compared to $\mathcal A$. We observe that $\mathcal A^r$ is also deterministic and thus every state in $\mathcal A^r$ is reached by a unique string from the source state $t$. It follows that every two states $u,v$ are clearly distinguishable in $\mathcal A$ by the Myhill-Nerode relation\footnote{Two states are Myhill-Nerode equivalent --- i.e.\ they can be collapsed in the minimum DFA --- if and only if they allow reaching final states with the same set of strings.} \cite{nerode1958linear} as they can reach the only final state $t$ using a unique string.

\begin{figure}[ht]
  \centering{
    \resizebox{0.8\columnwidth}{!}{
      \begin{tikzpicture}[
          scale=.7,
          ->,
          >=stealth',
          shorten >=1pt,
          auto,
          semithick,
          every node/.style={minimum size=8mm}
        ]

        \begin{scope}[yshift=6cm, xshift=4cm]
        	\node[orange] at (-0.5, 3.85) {\LARGE $C^A_i:$};
        	\node[circle, draw] (A)   at (0,0)  {$\hat a_i$};
        	\node[circle, draw] (B)   at (4 - 2.8284, -2.8284) {$a_i^1$};
        	\node[circle, draw] (C)   at (4, -4)  {$a_i^2$};
        	\node[circle, draw] (D)   at (4 + 2.8284, -2.8284) {$a_i^3$};
        	\node[circle, draw] (E)   at (8, 0)  {$a_i^4$};
        	\node[circle, draw] (F)   at (4 + 2.8284, 2.8284) {};
        	\node[circle, draw] (G)   at (4, 4)  {};
        	\node[circle, draw] (H)   at (4 - 2.8284, 2.8284)  {$a_i^d$};
        	\path (A) edge [sloped] node {$a_i[1]$} (B)
	              (B) edge [sloped] node {$a_i[2]$} (C)
	              (C) edge [sloped] node {$a_i[3]$} (D)
	              (D) edge [sloped] node {$a_i[4]$} (E)
	              (E) edge[white] [sloped] node[black] {$\ldots$} (F)
	              (F) edge [sloped] node {$a_i[d-1]$} (G)
	              (G) edge [sloped] node {$a_i[d]$} (H)
	              (H) edge[line width=1mm, dashed] [sloped, color=darkgreen] node {$Q_i$} (A);
        \end{scope}

        \begin{scope}[yshift=6cm, xshift=15cm]
        	\node[hanblue] at (-0.5, 3.85) {\LARGE $C^B_j:$};
        	\node[circle, draw] (A)   at (0,0)  {$\hat b_j$};
        	\node[circle, draw] (B)   at (4 - 2.8284, -2.8284) {$b_j^1$};
        	\node[circle, draw] (C)   at (4, -4)  {$b_j^2$};
        	\node[circle, draw] (D)   at (4 + 2.8284, -2.8284) {$b_j^3$};
        	\node[circle, draw] (E)   at (8, 0)  {$b_j^4$};
        	\node[circle, draw] (F)   at (4 + 2.8284, 2.8284) {};
        	\node[circle, draw] (G)   at (4, 4)  {};
        	\node[circle, draw] (H)   at (4 - 2.8284, 2.8284)  {$b_j^d$};
        	\path (A) edge [sloped] node {$0$} (B)
	              (B) edge[bend left] [sloped] node {$0$} (C)
	              (C) edge [sloped] node {$0$} (D)
	              (D) edge[bend left] [sloped] node {$0$} (E)
	              (E) edge[white] [sloped] node[black] {$\ldots$} (F)
	              (F) edge [bend right] [sloped]node {$0$} (G)
	              (G) edge [sloped] node {$0$} (H)
	              (H) edge[line width=1mm, dashed] [sloped, color=darkyellow] node {$P_j$} (A);

	        \path (B) edge[bend right] [sloped] node {$1$} (C)
	        	  (D) edge[bend right] [sloped] node {$1$} (E)
	        	  (F) edge [bend left] [sloped] node {$1$} (G);

        \end{scope}

        \begin{scope}[yshift=-0.5cm, xshift=4cm]
        	\node[darkgreen] at (-2, 0) {\Large $Q_i:$};
	        \node[circle, draw] (A)   at (0,0)  {$a_i^d$};
	        \node[circle, draw] (B)   at (4,0)  {$q_i^1$};
	        \node[circle, draw] (C)   at (8,0) {$q_i^2$};
	        \node[circle, draw] (D)   at (12,0) {};
	        \node[] 		(D)   at (10.5,0)  {$\ldots$};
	        \node[circle] 	(E)   at (12,0)  {};
	        \node[circle, draw] 	(F)   at (16,0)  {$q_i^{\ell}$};
            \node[circle, draw] 	(G)   at (20,0)  {$\hat a_i$};

	        \path 
	              (A) edge [above, bend left] node {$0$} (B)
	              (B) edge [above, bend left] node {$0$} (C)
	              (C) edge [above, bend left] node {} (D)
                  (E) edge [above, bend left] node {$0$} (F);

	        \path (A) edge [above, bend right] node {$1$} (B)
	              (B) edge [above, bend right] node {$1$} (C)
	              (C) edge [above, bend right] node {} (D)
                  (E) edge [above, bend right] node {$1$} (F);
                \path (F) edge [above] node {$\#$} (G);
        \end{scope}

        \begin{scope}[yshift=-3cm, xshift=4cm]
	        \node[darkyellow] at (-2, 0) {\Large $P_j:$};
	        \node[circle, draw] (A)   at (0,0)  {$b_j^d$};
	        \node[circle, draw] (B)   at (4,0)  {$p_j^1$};
	        \node[circle, draw] (C)   at (8,0) {$p_j^2$};
	        \node[circle, draw] (D)   at (12,0) {};
	        \node[] 		(D)   at (10.5,0)  {$\ldots$};
	        \node[circle] 	(E)   at (12,0)  {};
	        \node[circle, draw] 	(F)   at (16,0)  {$p_j^\ell$};
            \node[circle, draw] 	(G)   at (20,0)  {$\hat b_j$};

	              \path (A) edge [above] node {$\rho(j)[1]$} (B)
	              (B) edge [above] node {$\rho(j)[2]$} (C)
	              (C) edge [above] node {} (D)
                    (E) edge [above] node {$\rho(j)[\ell]$} (F)
                    (F) edge [above] node {$\#$} (G);
	    \end{scope}

      \end{tikzpicture}
    }
  }
\caption{Illustration of $C_i^A$ and $C_j^B$ containing the subgraphs $Q_i$ and $P_j$, respectively.}
\label{fig:reduction det}
\end{figure}
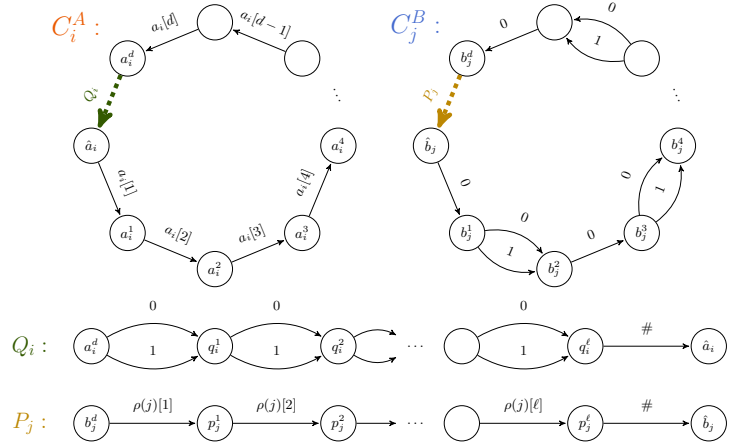

\subsection*{Proof of Proposition~\ref{prop: reduction}}
We are now ready to prove Proposition~\ref{prop: reduction}. We start with the following lemma.
\begin{lemma}\label{lem: cycles AB}
    Let $(u_1,v_1) \rightarrow (u_2,v_2) \rightarrow \dots \rightarrow (u_k,v_k) \rightarrow (u_1,v_1)$ be a cycle in $\mathcal{A}^2$ such that, for $1\le \forall i \le k$, $u_i\neq v_i$. Then, there exist $r, s \in [N]$ and $\ell\in [k]$ such that $\hat a_r=u_\ell$ and $\hat b_s = v_\ell$ (or vice versa) and furthermore $\{u_1, \ldots, u_k\} = C_r^A$ and $\{v_1, \ldots, v_k\} = C_s^B$ (or vice versa).
\end{lemma}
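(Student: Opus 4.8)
The plan is to analyze the structure of $\mathcal{A}$ carefully and show that any cycle in $\mathcal{A}^2$ consisting of states $(u_i, v_i)$ with $u_i \neq v_i$ must ``live inside'' the cyclic part $C$ of the automaton, and moreover must use one $A$-cycle in the first coordinate and one $B$-cycle in the second (or vice versa). First I would observe that $\mathcal{A}$ has a very rigid shape: the only strongly connected components of $\mathcal{A}$ of size greater than one are exactly the cycles $C_1^A,\dots,C_N^A,C_1^B,\dots,C_N^B$; everything else (the out-tree $V^{out}$, the intermediate nodes $I$, and the in-tree $V^{in}$) is acyclic, and all edges go ``forward'' from $s$ towards the cycles or from the cycles towards $t$, with no edges leaving a cycle back into $V^{out}\cup I$ and no edges from $V^{in}$ back into $C$. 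Hence if $(w_1,\dots,w_k,w_1)$ is \emph{any} cycle in $\mathcal{A}$ (in particular the projection $u_1 \to \dots \to u_k \to u_1$ of a cycle in $\mathcal{A}^2$), all of $w_1,\dots,w_k$ must lie within a single one of the cycles $C_r^A$ or $C_s^B$, and in fact $\{w_1,\dots,w_k\}$ is \emph{all} of that cycle (since each $C_r^A$, $C_s^B$ is a single cycle of length $d+\ell+1$ passing through every one of its nodes exactly once per traversal, modulo the non-simple ``choice'' edges in the $\{0,1\}$ gadgets, which share endpoints and hence do not change which node set is visited). Applying this to both projections, the cycle in $\mathcal{A}^2$ forces $\{u_1,\dots,u_k\}$ to be (the node set of) one of the $C^A$/$C^B$ cycles and likewise $\{v_1,\dots,v_k\}$; it remains to pin down \emph{which kind}.

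Next I would rule out the ``same type'' cases. Suppose both $\{u_i\}$ and $\{v_i\}$ equal $C^A$-cycles, say $C_r^A$ and $C_{r'}^A$. The state $\hat a_r$ (the unique node of $C_r^A$ with an incoming $\#$-edge, equivalently the unique one with an outgoing edge that is the first bit of $a_r$) is visited by the $u$-projection at some index $\ell$; since the two projections march in lockstep reading the same label, at that same index $\ell$ the label of the outgoing edge is $a_r[1]$ (more precisely, the $\#$ edge enters $\hat a_r$; I would actually track the unique $\#$-labeled edge, which in each cycle occurs exactly once, going $q_i^\ell \to \hat a_i$ or $p_j^\ell \to \hat b_j$). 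Reading a $\#$ at step $\ell$ in the $u$-coordinate forces the $v$-coordinate also to traverse its unique $\#$-edge at step $\ell$, landing on $\hat a_{r'}$. From $\hat a_r$ and $\hat a_{r'}$ the two cycles then read, in lockstep and over the next $d$ steps, exactly the bit strings $a_r$ and $a_{r'}$ respectively; since these must agree character-by-character, $a_r = a_{r'}$, and by assumption~(1) on the OV instance (distinct vectors in $A$) we get $r = r'$. But then $u_i = v_i$ for all $i$ after suitably aligning the cyclic indices — contradicting $u_i \neq v_i$. The case of two $C^B$-cycles $C_s^B, C_{s'}^B$ is the same but uses the $\rho$-part: after the synchronized $\#$-edge at step $\ell$, the next $d$ steps read $0/1$-gadgets determined by $b_s$ and $b_{s'}$, \emph{and then} $\ell$ more steps spell $\rho(s)$ and $\rho(s')$ literally; agreement of those deterministic suffixes gives $\rho(s)=\rho(s')$, i.e.\ $s=s'$, again contradicting $u_i\neq v_i$. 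Therefore one projection is a $C^A$-cycle and the other a $C^B$-cycle, which is exactly the claim, with $\ell$ chosen to be the step at which both coordinates traverse their $\#$-edge (so $u_\ell = \hat a_r$ and $v_\ell = \hat b_s$, or vice versa).

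The main obstacle I anticipate is not the global ``cycles live in $C$'' argument — that is a routine consequence of the layered construction — but rather the careful bookkeeping around the non-simple cycles: each $C_i^A$ and $C_j^B$ is \emph{not} a simple cycle, because the $0$-bits of $b_j$ and the length-$\ell$ gadget $Q_i$ introduce parallel edges between the same pair of consecutive nodes. I need to argue that, despite this, the \emph{sequence of nodes} traversed by one full loop of the cycle is uniquely determined (only the edge \emph{labels} vary, and even those only at the gadget positions), so that ``$\{u_1,\dots,u_k\}$ is a cycle'' really does mean the node set equals $C_r^A$ or $C_s^B$ and the traversal length $k$ is a multiple of $d+\ell+1$. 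Concretely, I would note that each node in $C_i^A$ (resp.\ $C_j^B$) has its out-edges all going to the \emph{same} successor node (the multiplicity is only in labels), so the successor function on nodes within a cycle is single-valued; hence a walk that returns to its start after $k$ steps has $k$ a multiple of the cycle length and visits exactly the cycle's node set. With that lemma in hand, the lock-step reading argument and the appeal to distinctness of vectors / injectivity of $\rho$ close the proof.
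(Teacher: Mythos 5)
Your proposal is correct and takes essentially the same route as the paper's proof: observe that the only directed cycles of $\mathcal A$ are the $C_r^A$ and $C_s^B$, so each projection must equal one such cycle; locate the unique $\#$-edge to synchronize and define $\ell$; and rule out the $C^A$/$C^A$ case via distinctness of the vectors in $A$ and the $C^B$/$C^B$ case via the injectivity of $\rho$ on the $P_j$ suffix. The only difference is that you spell out the bookkeeping around the non-simple cycles (the node-level successor function being single-valued despite parallel edges) and the length constraint on $k$, which the paper treats as immediate; in fact, since a ``cycle'' in $\mathcal A^2$ has distinct states, the constraint you derive (a multiple of $d+\ell+1$) tightens to $k=d+\ell+1$ exactly, matching the paper.
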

\begin{proof}
    For simplicity let us call $C=\{u_1, \ldots, u_k\}$ and $C'=\{v_1, \ldots, v_k\}$ and let us take the indices of those nodes modulo $k$, i.e., $u_{i+k} = u_i$.
    We first notice that the only directed cycles in $\mathcal A$ are $C_r^A$ for $r\in[N]$ and $C_s^B$ for $s\in[N]$ and they are all of length $d+\ell$. Hence $k=d+\ell+1$. As the character $\#$ appears only once in every such cycle, it follows that there exists $\ell\in [k]$ such that $u_\ell, v_\ell\in \{\hat a_r:r\in [N]\} \cup \{\hat b_s:s\in [N]\}$.
    We now proceed by showing that it cannot be that $C = C_r^A$ and $C' = C_{r'}^A$ for some $r,r'\in[N]$ or $C = C_s^B$ and $C' = C_{s'}^B$ for some $s,s'\in[N]$. First assume that $C = C_r^A$ and $C' = C_{r'}^A$ for some $r,r'\in[N]$. Notice that $r\neq r'$ by the assumption that $u_i\neq v_i$ for $1\le \forall i \le k$. We now directly get a contradiction as the labels of the edges $(u_\ell, u_{\ell +1}), \ldots, (u_{\ell + d - 1}, u_{\ell + d})$ and $(v_\ell, v_{\ell +1}), \ldots, (v_{\ell + d - 1}, v_{\ell + d})$ cannot match by the assumption that $A$ contains no two identical vectors. Now assume that $C = C_s^B$ and $C' = C_{s'}^B$ for some $s,s'\in[N]$. This analogously yields to a contradiction as $s\neq s'$ and the paths $P_s$ and $P_{s'}$ cannot have matching labels. It thus follows that $C = C_r^A$ for some $r\in[N]$ and $C' = C_{s}^B$ for some $s\in[N]$ (or vice versa). Consequently also $\hat a_r=u_\ell$ and $\hat b_s = v_\ell$ (or vice versa) and this completes the proof.
\end{proof}

\subsubsection*{$(\Rightarrow)$}
As the OV instance is a YES-instance, there exist vectors $a_r\in A$ and $b_s\in B$, $r,s\in[N]$ such that $a_r[i]\cdot b_s[i]=0$ for all $i\in [d]$. Let $C_r^A$ and $C_s^B$ be the node sets corresponding to $a_r$ and $b_s$ in $\mathcal A$ and let  $u_1, \ldots, u_{\ell + d + 1}$ and $v_1, \ldots, v_{\ell + d + 1}$ be the nodes in $C_r^A$ and $C_s^B$, respectively, starting from $u_1=\hat a_r$ and $v_1= \hat b_s$. We now need to show that these two cycles can be traversed using the same labels. For $i\in [d]$, transition  $(u_i, u_{i+1})$ is labeled $a_r[i]$. If $b_s[i]=0$, there exist transitions from $v_i$ to $v_{i+1}$ with both labels $0$ and $1$ and so we can always choose a transition matching $a_r[i]$. If $b_s[i]=1$, there exist only a transition labeled $0$ from $v_i$ to $v_{i+1}$, however in this case it must hold that $a_r[i]=0$ and thus the transition from $u_i$ to $u_{i+1}$ is again matched. Now for $d+1 \le i \le d + \ell$, node $u_{i+1}$ can be reached from node $u_i$ using any suitable transition. Then, it is clear that both cycles on the last transition $(u_{d+\ell+1}, u_1)$ and $(v_{d+\ell+1}, v_1)$ do the transition $\#$ resepctively. We call the string that can be read in both cycles $\gamma$. It remains to show that, for $1\le \forall i \le \ell + d + 1$, (i) $u_i\neq v_i$ and (ii) $\mathcal{I}(u_i)\cap\mathcal{I}(v_i)\ne\emptyset$. Property (i) is clearly true. For (ii), we let $s_1 = 0\rho(r)000\gamma[1..i]$, $s_2 = 1\rho(s)010\gamma[1..i]$, and $s_3 = 0\rho(r)110\gamma[1..i]$ and observe that $s_1 < s_2 < s_3$ and $s_1, s_3\in I_{u_i}$ as well as $s_2\in I_{v_i}$. It thus follows that $\mathcal{I}(u_i)\cap\mathcal{I}(v_i)\ne\emptyset$.

\subsubsection*{$(\Leftarrow)$}
Let $(u_1,v_1) \rightarrow (u_2,v_2) \rightarrow \dots \rightarrow (u_k,v_k) \rightarrow (u_1,v_1)$ be a cycle in $\mathcal{A}^2$ such that, for all $i \in [k]$, (i) $u_i\neq v_i$ and (ii) $\mathcal{I}(u_i)\cap\mathcal{I}(v_i)\ne\emptyset$. From Lemma~\ref{lem: cycles AB} it follows that there exist $r, s \in [N]$ and $\ell\in [k]$ such that $a_r=u_\ell$ and $b_s = v_\ell$ (or vice versa) and furthermore $\{u_1, \ldots, u_k\} = C_r^A$ and $\{v_1, \ldots, v_k\} = C_s^B$ (or vice versa). Furthermore $k=d+\ell+1$. W.l.o.g., assume that $a_r=u_1$, $b_s = v_1$ and $\{u_1, \ldots, u_k\} = C_r^A$ and $\{v_1, \ldots, v_k\} = C_s^B$. From $(u_1,v_1) \rightarrow (u_2,v_2) \rightarrow \dots (u_{d+1}, v_{d+1})$ we conclude that the label of $(v_i, v_{i+1})$ is equal to $a_r[i]$ for every $i\in [d]$. We are now ready to argue that $a_r[i] \cdot b_s[i]=0$ for every $i\in [d]$ and thus $a_r$ and $b_s$ are orthogonal. If $a_r[i]=0$, there is nothing to show. If $a_r[i]=1$, the previous claim yields that the label of $(v_i, v_{i+1})$ is $1$, which by definition of the transitions in $C_s^B$ implies that $b_s[i]=0$. Hence $a_r[i] \cdot b_s[i]=0$ also in this case.

\subsubsection*{Alphabet size 2}
Finally, we observe that the constructed DFA $\mathcal A$ is over an alphabet of size $3$ as we introduced the letter $\#$ in addition to $0$ and $1$. We now apply the following transformation to $\mathcal A$ in order to obtain a DFA over the binary alphabet $\{0,1\}$. We replace every edge labeled with $0$ with a directed path of length 2 labeled with $00$, every edge labeled with $1$ with a directed path of length 2 labeled with $11$, and the edges labeled with $\#$ with a directed path of length 3 labeled with $101$. Since the pattern $101$ appears only on the paths that originally corresponded to the character $\#$, it is easy to see that two transformed cycles match if and only if they used to match before the transformation.
We note that this transformation preserves determinism of $\mathcal A$ as the nodes with outgoing edges labeled $\#$ have only two outgoing transitions, the one labeled $\#$ and one labeled $0$ going to the nodes in $V^{in}$. Recall that we used also determinism of the reversed graph $\mathcal A^r$ in order to argue that $\mathcal A$ is minimal. It is easy to see that the transformation also maintains determinism of the reversed graph as the nodes with ingoing transitions labeled $\#$ have only one second ingoing transition that is labeled $0$ (the one from the nodes in $I$). 

\end{document}